\documentclass{llncs}
\usepackage{amssymb,amsmath,enumerate,latexsym}
\usepackage{color,algorithmic,xspace,theorem}

\usepackage[margin=3cm,paper=letterpaper]{geometry}
\theoremstyle{remark}

\newcounter{tbsnr}\newenvironment{tbs}
{\addtocounter{tbsnr}{1}\par\bigskip \noindent\fbox{\thetbsnr}
\hspace{2mm}\begin{minipage}{.9\linewidth}\tt \small}
{\end{minipage}\hspace*{\fill}\bigskip}

\newcommand{\bS}{\textrm{\bf S}}
\newcommand{\bT}{\textrm{\bf T}}
\newcommand{\cM}{\mathcal{M}}
\newcommand{\Cons}{\underline{\mathsf{Cons}}}
\newcommand{\Vars}{\underline{\mathsf{Nulls}}}
\newcommand{\Terms}{\underline{\mathsf{Terms}}}
\newcommand{\PTerms}{\underline{\mathsf{PTerms}}}
\newcommand{\Sol}{\textrm{\sf Sol}}
\newcommand{\dom}{\text{\rm dom}}
\newcommand{\FO}{\text{\rm FO}\xspace}
\newcommand{\FOO}{\text{\rm FO$^<$}\xspace}
\newcommand{\FOOO}{\text{\rm FO$^{(<)}$}\xspace}
\newcommand{\CQ}{\text{\rm CQ}\xspace}
\newcommand{\CQO}{\text{\rm CQ$^<$}\xspace}
\newcommand{\UCQ}{\text{\rm UCQ}\xspace}
\newcommand{\UCQO}{\text{\rm UCQ$^<$}\xspace}

\sloppy

\begin{document}

 \author{Balder ten Cate\inst{1} \and Laura Chiticariu\inst{2} \and
 Phokion Kolaitis\inst{3} \and Wang-Chiew Tan\inst{4}}

\institute{University of Amsterdam \and IBM Almaden \and UC Santa Cruz and IBM Almaden \and UC Santa Cruz}

\title{Laconic schema mappings: computing core universal solutions by
  means of SQL queries\thanks{This work was carried out during a visit of the
    first author to UC Santa Cruz and IBM Almaden. The work of the first
    author funded by the Netherlands Organization of Scientific
    Research (NWO) grant      
 639.021.508 and NSF grant IIS-0430994. The work of the
    third author partly funded by NSF grant IIS-0430994. The work of the
    fourth author partly funded by NSF CAREER Award IIS-0347065 and
    NSF grant IIS-0430994.  }}
\maketitle 

\begin{abstract}
  We present a new method for computing core universal solutions in
  data exchange settings specified by source-to-target dependencies,
  by means of SQL queries. Unlike previously known algorithms, which
  are recursive in nature, our method can be implemented directly on
  top of any DBMS.  Our method is based on the new notion of a laconic
  schema mapping.  A laconic schema mapping is a schema mapping for
  which the canonical universal solution is the core universal
  solution.  We give a procedure by which every schema mapping
  specified by \FO s-t tgds can be turned into a laconic schema
  mapping specified by \FO s-t tgds that may refer to a linear order
  on the domain of the source instance.  We show that our results are
  optimal, in the sense that the linear order is necessary and the
  method cannot be extended to schema mapping involving target
  constraints.
\end{abstract}

\section{Introduction}

  We present a new method for computing core universal solutions in
  data exchange settings specified by source-to-target dependencies,
  by means of SQL queries. Unlike previously known algorithms, which
  are recursive in nature, our method can be implemented directly on
  top of any DBMS.  Our method is based on the new notion of a laconic
  schema mapping.  A laconic schema mapping is a schema mapping for
  which the canonical universal solution is the core universal
  solution.  We give a procedure by which every schema mapping
  specified by \FO s-t tgds can be turned into a laconic schema mapping
  specified by \FO s-t tgds that may refer to a linear order on the
  domain of the source instance.

\paragraph*{Outline of the paper:} In Section~\ref{sec:prel}, we recall
basic notions and facts about schema mappings.
Section~\ref{sec:obtaining-solutions} explains what it means to
compute a target instance by means of SQL queries, and we state our
main result.  Section~\ref{sec:laconicity} introduces the notion of
laconicity, and contains some initial observations In
Section~\ref{sec:transformation}, we present our main result, namely a
method for transforming any schema mapping specified by \FO s-t tgds
into a laconic schema mapping specified by \FO s-t tgds asssuming a
linear order.
In Section~\ref{sec:target-constraints}, we show that our results cannot
be extended to the case with target constraints. 

\section{Preliminaries} \label{sec:prel}

In this section, we recall basic notions from data exchange and fix
our notation.

\subsection{Instances and homomorphisms}
\label{sec:prel-instances}

Fix disjoint infinite sets of constant values $\Cons$ and null values
$\Vars$, and let $<$ be a linear order on $\Cons$. We consider
instances whose values are from $\Cons\cup\Vars$. We use $\dom(I)$ to
denote the set of values that occur in facts in the instance $I$.  A
\emph{homomorphism} $h:I\to J$, with $I,J$ instances of the same
schema, is a function $h:\Cons\cup\Vars\to\Cons\cup\Vars$ with
$h(a)=a$ for all $a\in\Cons$, such that for all relations $R$ and all
tuples of (constant or null) values $(v_1,\ldots,v_n)\in R^I$,
$(h(v_1),\ldots,h(v_n))\in R^J$.  Instances $I,J$ are
\emph{homomorphically equivalent} if there are homomorphisms $h:I\to
J$ and $h':J\to I$.  An \emph{isomorphism} $h:I\cong J$ is a
homomorphism that is a bijection between $\dom(I)$ and $\dom(J)$
and that preserves truth of atomic formulas in both directions.
Intuitively, nulls act as placeholders for actual (constant) values,
and a homomorphism from $I$ to $J$ captures the fact that $J$
``contains more, or at least as much information'' as $I$.

The \emph{fact graph} of an instance $I$ is the graph whose nodes are
the facts $R\vec{v}$ (with $R$ a $k$-ary relation and
$\vec{v}\in(\Cons\cup\Vars)^k$, $k\geq 0$) true in $I$, and such that there is an
edge between two facts if they have a null value in common.

We will denote by \CQ, \UCQ, and \FO the set of conjunctive queries,
unions of conjunctive queries, and first-order queries, respectively,
and \CQO, \UCQO, and \FOO are defined similarly, except that the
queries may refer to the linear order.  Thus, unless
indicated explicitly, it is assumed that queries do not refer to the
linear order. For any query $q$ and instance $I$, we denote by $q(I)$
the answers of $q$ in $I$, and we denote by $q(I)_{\downarrow}$ the
ground answers of $q$, i.e., $q(I)_{\downarrow}=q(I)\cap\Cons^k$ for
$k$ the arity of $q$.

\subsection{Schema mappings, universal solutions, and certain answers}

Let $\bS$ and $\bT$ be disjoint schemas, called the source schema and
the target schema.  As usual in data exchange, whenever we speak of a
source instance, we will mean an instance of $\bS$ whose values belong
to $\Cons$, and when we speak of a target instance, we will mean a
instance of $\bT$ whose values may come from $\Cons\cup\Vars$.

A schema mapping is a triple
$\cM=(\mathbf{S},\mathbf{T},\Sigma_{st})$, where $\mathbf{S}$ and
$\mathbf{T}$ are the source and target schemas and $\Sigma_{st}$ is a
finite set of sentences of some logical language defining a class of
pairs of instances $\langle I,J\rangle$. Here, $\langle I,J\rangle$
denotes union of a source instance $I$ and a target instance $J$,
which is itself an instance over the joint schema $\bS\cup\bT$, and
the logical languages we consider are presented below.  Two schema
mappings, $\cM=(\bS,\bT,\Sigma_{st})$ and
$\cM'=(\bS,\bT,\Sigma'_{st})$, are said to be \emph{logically
  equivalent} if $\Sigma_{st}$ and $\Sigma'_{st}$ are logically
equivalent, i.e., satisfied by the same pairs of instances.  Given a
schema mapping $\cM=(\mathbf{S},\mathbf{T},\Sigma_{st})$ and a source
instance $I$, a \emph{solution} for $I$ with respec to $\cM$ is a target
instance $J$ such that $\langle I,J\rangle$ satisfies $\Sigma_{st}$.
We denote the set of solutions for $I$ with respect to $\cM$ by
$\Sol_{\cM}(I)$, or simply $\Sol(I)$ when the schema mapping is clear
from the context.

The concrete logical languages that we will consider for the
specification of $\Sigma_{st}$ are the following. A
\emph{source-to-target tuple generating dependency} (s-t tgd) is a
first-order sentence of the form
$\forall\vec{x}(\phi(\vec{x})\to\exists\vec{y}.\psi(\vec{x},\vec{y}))$,
where $\phi$ is a conjunction of atomic formulas over $\bS$ and $\psi$
is a conjunction of atomic formulas over $\bT$, such that each
variable in $\vec{x}$ occurs in $\phi$.  A more general class of
constraints called \emph{\FO s-t tgds} is defined analogously, except
that the antecedent is allowed to be any \FO query over
$\bS$. Similarly, \emph{$L$ s-t tgds} can be defined for any query
language $L$. A \emph{LAV s-t tgd}, finally, is an s-t tgd in which
$\phi$ is a single atomic formula. To simplify notation, we will
typically drop the universal quantifiers when writing ($L$) s-t tgds.

Given a source instance $I$, a schema mapping $\cM$, and a target
query $q$, we will denote by $certain_{\cM,q}(I)$ the set of
\emph{certain answers} to $q$ in $I$ with respect to $\cM$, i.e., the
intersection $\bigcap_{J\in \Sol_{\cM}(I)} q(J)$. In other words, a
tuple of values is a certain answer to $q$ if belongs to the set of
answers of $q$, no matter which solution of $I$ one picks.  There are
two methods to compute certain answers to a conjunctive query. The
first method uses \emph{universal solutions} and the second uses
\emph{query rewriting}.

A universal solution for a source instance $I$ with respect to a
schema mapping $\cM$ is a solution $J\in \Sol_\cM(I)$ such that, for
every $J'\in \Sol_\cM(I)$, there is a homomorphism from $J$ to $J'$.
It was shown in \cite{Fagin05:data} that the certain answers for a
conjunctive target query can be obtained simply by evaluating the
query in a universal solution. Moreover, universal solutions are
guaranteed to exist for schema mappings specified by $L$ s-t tgds, for
\emph{any} query language $L$.

\begin{theorem}[\cite{Fagin05:data}]
  For all schema mappings $\cM$, source instances
  $I$, conjunctive queries $q$, and universal solutions $J\in
  \Sol(I)$,  $certain_{\cM}(q)(I) = q(J)_{\downarrow}$.
\end{theorem}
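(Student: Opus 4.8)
The plan is to prove the two inclusions of the stated equality separately. Fix a schema mapping $\cM=(\bS,\bT,\Sigma_{st})$, a source instance $I$, a conjunctive query $q$ of arity $k$, and a universal solution $J\in\Sol(I)$. Two elementary properties of conjunctive queries are all that is needed: (i) $q$ is preserved under homomorphisms, and (ii) every value occurring in a tuple of $q(K)$, for any instance $K$, belongs to $\dom(K)$ (because every distinguished variable of a conjunctive query occurs in its body). Recall also the standing convention that $q$ does not mention the linear order, which is what makes (i) available.

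For the inclusion $q(J)_{\downarrow}\subseteq certain_{\cM,q}(I)$, I would take a ground answer $\vec a\in q(J)\cap\Cons^k$ and an arbitrary solution $J'\in\Sol(I)$. Since $J$ is universal there is a homomorphism $h\colon J\to J'$, so by (i) we get $h(\vec a)\in q(J')$; and since $h$ is the identity on $\Cons$ while $\vec a$ consists only of constants, $h(\vec a)=\vec a$, hence $\vec a\in q(J')$. As $J'$ was an arbitrary solution, $\vec a\in\bigcap_{J'\in\Sol(I)}q(J')=certain_{\cM,q}(I)$.

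For the reverse inclusion $certain_{\cM,q}(I)\subseteq q(J)_{\downarrow}$, let $\vec a\in certain_{\cM,q}(I)$. Since $J$ is itself a solution, $\vec a\in q(J)$ holds by the definition of certain answers, so it remains only to verify that $\vec a$ is ground. I would argue this by contradiction: suppose some coordinate of $\vec a$ is a null value $n$ (which by (ii) lies in $\dom(J)$). Choose a null $n'\notin\dom(J)$ and let $\rho$ be the permutation of $\Cons\cup\Vars$ that transposes $n$ and $n'$ and fixes every other value. Then $\rho$ is an isomorphism, it is the identity on $\Cons$ and therefore fixes the source instance $I$, so $\langle I,\rho(J)\rangle\cong\langle I,J\rangle$; since the constraints in $\Sigma_{st}$ are preserved under isomorphisms, $\langle I,\rho(J)\rangle\models\Sigma_{st}$, i.e.\ $\rho(J)\in\Sol(I)$. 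But $n\notin\dom(\rho(J))$, so by (ii) no tuple in $q(\rho(J))$ contains $n$, whence $\vec a\notin q(\rho(J))$, contradicting $\vec a\in certain_{\cM,q}(I)$. Hence $\vec a\in\Cons^k$, i.e.\ $\vec a\in q(J)_{\downarrow}$.

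The first inclusion is the routine one. The (mild) obstacle is in the second: one must justify that certain answers are always ground, which requires exhibiting, from a hypothetical non-ground certain answer, a concrete solution that refutes it. The fresh-null renaming above does exactly this, and it relies only on closure of the constraint class under isomorphisms, which is why the statement holds uniformly for all schema mappings in the sense of this paper. Note that universality of $J$ is used only in the first inclusion; the containment $certain_{\cM,q}(I)\subseteq q(K)_{\downarrow}$ in fact holds for every solution $K$.
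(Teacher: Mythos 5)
Your proof is correct: the paper itself gives no proof of this theorem (it is quoted from the cited reference), and your argument is essentially the standard one used there --- preservation of conjunctive queries under homomorphisms plus the fact that homomorphisms fix constants for the inclusion $q(J)_{\downarrow}\subseteq certain_{\cM,q}(I)$, and a fresh-null renaming of the solution to show certain answers are ground for the converse. The only implicit assumption is the usual safety convention that every free variable of a CQ occurs in an atom, which is standard and harmless here.
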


\begin{theorem}[\cite{Fagin05:data}] \label{thm:existence-usol} For
  every schema mapping $\cM$ specified by $L$ s-t tgds, with $L$ any
  query language, and for every source instance $I$, there is a
  universal solution for $I$ with respect to $\cM$.
\end{theorem}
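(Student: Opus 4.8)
The plan is to exhibit, for each source instance $I$, an explicit solution $J$ — the \emph{canonical universal solution} — obtained by a one-pass chase of $\Sigma_{st}$ with $I$, and then to verify directly that it is universal. Concretely, I would enumerate all pairs $(\sigma,\vec a)$ where $\sigma=\forall\vec x(\phi(\vec x)\to\exists\vec y\,\psi(\vec x,\vec y))$ is a dependency in $\Sigma_{st}$ and $\vec a$ is a tuple from $\dom(I)$ with $I\models\phi(\vec a)$; for each such pair pick a tuple $\vec n_{\sigma,\vec a}$ of pairwise distinct fresh null values (distinct also across different pairs), and let $J$ be the union over all these pairs of the sets of target facts obtained by instantiating $\psi(\vec x,\vec y)$ with $\vec x\mapsto\vec a$, $\vec y\mapsto\vec n_{\sigma,\vec a}$. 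A preliminary observation does most of the work: since $\phi$ is a query over $\bS$ only and $\bS,\bT$ are disjoint, $\phi(\langle I,J'\rangle)=\phi(I)$ for every target instance $J'$, and since every value of $I$ is a constant, every satisfying $\vec a$ consists of constants; hence the construction depends only on $I$ and never needs a second pass.

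First I would check that $J$ is a solution, i.e.\ $\langle I,J\rangle\models\Sigma_{st}$. Take any $\sigma\in\Sigma_{st}$ and any $\vec a$ with $\langle I,J\rangle\models\phi(\vec a)$; by the observation above this means $I\models\phi(\vec a)$, so $(\sigma,\vec a)$ is one of the enumerated pairs and by construction every atom of $\psi(\vec a,\vec n_{\sigma,\vec a})$ holds in $J$; taking $\vec y\mapsto\vec n_{\sigma,\vec a}$ witnesses the existential, so $\langle I,J\rangle\models\sigma$.

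Next I would establish universality. Let $J'$ be an arbitrary solution in $\Sol_{\cM}(I)$; I want a homomorphism $h\colon J\to J'$. Set $h$ to be the identity on $\Cons$. For each enumerated pair $(\sigma,\vec a)$: since $\langle I,J'\rangle\models\sigma$ and $\langle I,J'\rangle\models\phi(\vec a)$, there is a tuple $\vec b_{\sigma,\vec a}$ over $\dom(J')$ with $J'\models\psi(\vec a,\vec b_{\sigma,\vec a})$; define $h(\vec n_{\sigma,\vec a})=\vec b_{\sigma,\vec a}$ coordinatewise. This is well-defined because the tuples $\vec n_{\sigma,\vec a}$ for distinct pairs are disjoint and each consists of distinct nulls, so every null of $J$ is assigned a value exactly once; the remaining values of $\dom(J)$ are constants, on which $h$ is already fixed. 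Every fact of $J$ is, by construction, an atom of $\psi(\vec a,\vec n_{\sigma,\vec a})$ for some pair, and applying $h$ sends it to the corresponding atom of $\psi(\vec a,\vec b_{\sigma,\vec a})$, which holds in $J'$; hence $h$ is a homomorphism and $J$ is universal.

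The construction and both verifications are routine; the only points needing care are (i) the observation that the antecedents, being $\bS$-queries evaluated on an all-constant instance, return only tuples of constants, so that the chase needs no iteration and its output is insensitive to the target instance — this is exactly what makes the argument go through uniformly for an arbitrary query language $L$; and (ii) the well-definedness of $h$, which hinges on using genuinely fresh, pairwise-distinct nulls at every step. No finiteness assumption on $I$ is required: if $I$ is infinite the set of enumerated pairs and the instance $J$ may be infinite, but the argument is unaffected.
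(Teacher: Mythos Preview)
Your proposal is correct and matches the paper's approach: the paper does not spell out a proof but cites \cite{Fagin05:data} and remarks that ``the same argument applies'' for arbitrary $L$, while pointing to the naive chase of Figure~\ref{fig:naive-chase}, which is exactly the construction you give. Your explicit identification of why the argument is uniform in $L$ --- that the antecedent $\phi$ lives over $\bS$ only and is evaluated on an all-constant instance, so a single pass suffices --- is precisely the observation the paper is alluding to.
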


Theorem~\ref{thm:existence-usol} was shown in \cite{Fagin05:data} for
schema mappings specified by s-t tgds but the same argument applies to
schema mappigns specified by $L$ s-t tgds, for any query language $L$. 
We will discuss concrete methods for constructing universal solutions in
Section~\ref{sec:obtaining-solutions}.

The second method for computing certain answers to conjunctive queries
is by rewriting the given target query to a query over the source
that directly computes the certain answers to the original query.

\begin{theorem}\label{thm:certain} Let $L$ be any of $\UCQ$, $\UCQO$, $\FO$, $\FOO$. Then
 for every schema mapping $\cM$ specified by s-t tgds and for every
 $L$-query $q$ over $\bT$, one can compute in exponential
  time an $L$-query over $\bS$ defining $certain_{\cM,q}$.
\end{theorem}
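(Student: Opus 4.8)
The plan is to reduce the general case to the well-understood situation where the target query is a conjunctive query, using a canonical universal solution as an intermediary. First I would recall that, given a schema mapping $\cM$ specified by s-t tgds and a source instance $I$, the \emph{canonical universal solution} $\mathrm{CanSol}(I)$ can be obtained by the standard chase: for each s-t tgd $\phi(\vec{x})\to\exists\vec{y}.\psi(\vec{x},\vec{y})$ and each tuple $\vec{a}$ with $I\models\phi(\vec{a})$, add the facts in $\psi(\vec{a},\vec{z})$ where $\vec{z}$ is a tuple of fresh nulls. The key point is that membership of a constant tuple $\vec{c}$ in a conjunctive query $q$ over the canonical universal solution can itself be expressed by a \UCQ over the source: one takes the union, over all ways of mapping the atoms of $q$ into the ``pattern'' of atoms generated by the tgds, of the conjunctive query over $\bS$ that expresses ``this pattern is triggered and the relevant variables are identified with $\vec{c}$ as required.'' Since $q$ has finitely many atoms and $\Sigma_{st}$ is finite, there are only finitely (exponentially) many such maps, and each yields a \CQ over $\bS$; the resulting \UCQ over $\bS$ defines $certain_{\cM,q}$ by the first theorem quoted above (the certain answers are the ground answers in any universal solution).

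Next I would treat the four languages uniformly. For $L=\UCQ$ the construction above applies termwise to each disjunct, giving a \UCQ over $\bS$. For $L=\UCQO$ the same construction works: the only subtlety is that the linear order atoms in $q$ compare values that, in the canonical universal solution, may be nulls; but a homomorphism from a universal solution must fix constants, so an order atom between two variables is satisfied in \emph{all} solutions only if it is ``forced,'' and in practice one only needs to track which query variables are mapped to source constants (carried along via the pattern) and impose the order constraint on those — order atoms touching a genuinely new null can never contribute to a certain answer and the corresponding disjunct is dropped. For $L=\FO$ and $L=\FOO$ I would instead appeal to the fact that for schema mappings given by s-t tgds the certain answers of an \FO query coincide with the certain answers of its ``naive'' evaluation, which by a result of the data-exchange literature (Arenas--Barceló--Fagin--Libkin style, or directly via the canonical solution being ``generic'' enough) can be rewritten into an \FO (resp.\ \FOO) query over the source; concretely, one rewrites each subformula by the \UCQ-rewriting for its positive conjunctive pieces and handles negation and universal quantification by complementation over the active domain of the source, which stays within \FO.

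The main steps in order are therefore: (1) define the canonical universal solution via the chase and record its ``provenance'' (which source tuple and tgd produced each fact, and which positions carry source constants versus fresh nulls); (2) for a \CQ $q$, enumerate the finitely many homomorphism-type maps from the atoms of $q$ into the tgd consequents, and for each produce a \CQ over $\bS$; (3) take the union and argue correctness using the quoted theorem that certain answers equal ground answers in a universal solution; (4) lift to \UCQ by disjuncts, to \UCQO/\FOO by additionally propagating order atoms on constant-carrying variables, and to \FO by structural induction with active-domain relativization for the negative part; (5) bound the running time — the number of maps is exponential in $|q|$ and $|\Sigma_{st}|$, each produced formula has polynomial size, so the whole rewriting is computable in exponential time.

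The hard part will be step (4) for full \FO and \FOO: one must be careful that relativizing quantifiers to the active domain of the source is sound for certain answers, i.e., that a tuple is a certain answer to $q$ iff it is a certain answer to the active-domain-relativized $q$ evaluated via the canonical solution. This requires the genericity/homogeneity properties of the canonical universal solution (any two universal solutions are homomorphically equivalent, and nulls are interchangeable up to automorphism), and for \FOO it additionally requires that the order is only ever consulted on constants coming from the source, so that no spurious order comparison among nulls can change the certain answer. I would isolate this as a lemma about the canonical solution and then the theorem follows by assembling the pieces, with the exponential time bound coming entirely from the enumeration of atom-to-pattern maps in step (2).
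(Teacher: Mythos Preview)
Your treatment of the \UCQ and \UCQO cases is essentially correct: enumerating the ways to map the atoms of $q$ into the heads of the s-t tgds, and collecting the induced source-side conditions, does produce a \UCQ (resp.\ \UCQO) rewriting of the certain answers. The paper does not argue this way; it instead factors $\cM$ as a composition of a full-tgd mapping and a LAV mapping and then invokes MiniCon for the LAV step and view unfolding for the full step. Your direct enumeration is a standard alternative and gives the same exponential bound, so for these two cases the difference is one of presentation rather than substance.

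The \FO and \FOO cases, however, contain a genuine gap. Your claim that ``the certain answers of an \FO query coincide with the certain answers of its naive evaluation'' is false, and the structural-induction recipe you sketch (rewrite positive atoms via the \UCQ method, handle negation and universal quantification by complementation over the source active domain) does not compute certain answers. The basic obstruction is that $certain(\cdot)$ does not commute with the Boolean connectives: $certain(\neg\phi)$ is \emph{not} the complement of $certain(\phi)$, and $certain(\forall y.\phi)$ is not obtained by universally quantifying $certain(\phi)$. For a concrete failure, take any s-t tgd mapping and the target query $q(x)=\neg\exists y.R(x,y)$: naive evaluation on the canonical universal solution returns every constant with no outgoing $R$-edge there, whereas the certain answer is empty, since for any constant $a$ one may add $R(a,b)$ to a solution and still have a solution. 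Your appeal to ``genericity/homogeneity'' of the canonical solution does not repair this; those properties are exactly what justifies the \UCQ case (ground answers in a universal solution equal certain answers), and they break down as soon as negation appears. The lemma you propose to isolate in step~(4) is therefore not the right one.

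The paper itself does not give a self-contained argument for the \FO/\FOO cases either: it points to the decomposition-plus-MiniCon approach and, for \FOO, to a separate rewriting result in the literature. If you want to complete a direct proof, the structural induction must be replaced by a genuine analysis of which \FO formulas are forced by a class of solutions that is closed under homomorphic extensions; that is where the real content lies, and it is not a matter of relativizing quantifiers to the source active domain.
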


There are various ways in which such certain answer queries can be
obtained. One possibility is to split up the schema mapping $\cM$ into
a composition $\cM_1\circ\cM_2$, with $\cM_1$ specified by full
s-t tgds and $\cM_2$ specified by LAV s-t tgds, and then to
successively apply the known query rewriting techniques of MiniCon
\cite{Pottinger01:minicon} and full s-t tgd unfolding (cf.~\cite{Lenzerini02:data}). In~\cite{CateKolaitis}, an alternative
rewriting method was given for the case of $L=\FOOO$, which can be used to compute in 
polynomial time an $\FOOO$
source query $q'$ defining $certain_{\cM,q}$ 
over source instances whose domain contains at least two elements.

\subsection{Core universal solutions}\label{sec:cores}

A source instance can have many universal solutions. Among these, the
\emph{core universal solution} plays a special role.  A target
instance $J$ is said to be a \emph{core} if there is no proper
subinstance $J'\subseteq J$ and homomorphism $h:J\to J'$.  There is
equivalent definition in terms of retractions. A subinstance
$J'\subseteq J$ is called a \emph{retract} of $J$ if there is a
homomorphism $h:J\to J'$ such that for all $a\in dom (J')$, $h(a)=a$.
The corresponding homomorphism $h$ is called a \emph{retraction}. A
retract is \emph{proper} if it is a proper subinstance of the original
instance. A core of a target instance $J$ is a retract of $J$ that has
itself no proper retracts.  Every (finite) target instance has a
unique core, up to isomorphism. Moreover, two instances are
homomorphically equivalent iff they have isomorphic cores. It follows
that, for every schema mapping $\cM$, every source instance has at
most one core universal solution up to isomorphism. Indeed, if the
schema mapping $\cM$ is specified by \FO s-t tgds then each source
instance has \emph{exactly} one core universal solution up to
isomorphism \cite{Fagin05:core}. We will therefore freely speak of \emph{the}
core universal solution.

It has been convincingly argued that, among all universal solutions
for a source instance, the core universal solution is the preferred
solution. One important reason is that the core universal solution is
the smallest universal solution: if $J$ is the core universal solution
for a source instance $I$ with respect to a schema mapping $\cM$, and
$J'$ is any other solution universal solution for $I$, i.e., one that
is not a core, then $|J|<|J'|$. Consequently, the core universal
solution is the universal solution that is least expensive to
materialize. We add to this a second virtue of the core universal
solution, namely that, among all universal solutions, it is the most
conservative one in terms of the answers that it assigns to
conjunctive queries with inequalities.  We propose another reason to
be interested in the core universal solution, namely that it is the
solution that satisfies the most dependencies. In many practical data
exchange settings, one is interested in solutions satisfying certain
target dependencies. One way to obtain such solutions is to include
the relevant target dependencies in the specification of the schema
mapping. If the target dependencies satisfy certain syntactic
requirements (in particular, if they form a weakly acyclic set of
target tgds and target egds), then a solution satisfying these target
dependencies can be obtained by means of the chase. On the other hand,
sometimes it happens that the universal solution one constructs
without taking into account the target dependencies happens to satisfy
the target dependencies. Whether this happens depends very much on
which universal solution is constructed. For example if $\cM$ is the
schema mapping specified by the s-t tgd $Rx \to \exists y.Syx$, $I$ is
any source instance and $J$ a universal solution, then the first
attribute of $S$ is not necessarily a key in $J$. However, if $J$ is
the core universal solution, then it \emph{will} be a key.  In fact,
it turns out that the core universal solution is the universal
solution that maximizes the set of valid target dependencies.
To make this precise, let a \emph{disjunctive target dependency} be a
first-order sentence of the from
$\forall\vec{x}\phi(\vec{x})\to\bigvee_i\exists\vec{y}_i.\psi_i(\vec{x},\vec{y}_i))$,
where $\phi, \psi_i$ are conjunctions of atomic formulas over the
target schema $\bT$
and/or equalities.  Then we have:

 \begin{theorem} Let $\cM$ be any schema mapping,
   $I$ be any source instance, $J$ the core universal solution of
   $I$, and $J'$ any other universal solution of $I$, i.e., one that 
  is not a core. Then
  \begin{enumerate}
 \item Every disjunctive dependency valid on $J'$ is valid on $J$, and
 \item Some disjunctive dependency valid on $J$ is not valid on $J'$.
  \end{enumerate}
\end{theorem}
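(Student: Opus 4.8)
\noindent\emph{Proof plan.} My plan is to derive both parts from a single structural fact about cores, after which the rest is routine preservation-under-homomorphisms bookkeeping. Since $J$ and $J'$ are both universal solutions for $I$ they are homomorphically equivalent, hence --- by the facts recalled above --- have isomorphic cores; as $J$ is itself a core, it is isomorphic to the core of $J'$, which is a retract of $J'$. Validity of a first-order sentence is isomorphism-invariant, so I would assume without loss of generality that $J$ \emph{is} a subinstance of $J'$ equipped with a retraction $r\colon J'\to J$, i.e.\ a homomorphism with $r(a)=a$ for all $a\in\dom(J)$. The one point that genuinely needs an argument is that this retract is proper not merely as an instance but already on domains: if $\dom(J)=\dom(J')$ then $r$ would fix every element of $\dom(J')$, forcing every fact of $J'$ into $J$ and hence $J=J'$, contradicting that $J'$ is not a core; so $|\dom(J)|<|\dom(J')|$. (Throughout I take the instances to be finite, as is standard when speaking of cores.) This strict drop in the \emph{number of elements} is exactly what the separating dependency in part~(2) exploits, so I expect the design of that dependency to be the only real obstacle.

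For part~(1), suppose $\delta=\forall\vec x\,\bigl(\phi(\vec x)\to\bigvee_i\exists\vec y_i\,\psi_i(\vec x,\vec y_i)\bigr)$ is valid on $J'$, and let $\vec a$ be a tuple with $J\models\phi(\vec a)$. Since $\phi$ is a conjunction of relational atoms and equalities it is preserved by the inclusion homomorphism $J\hookrightarrow J'$, so $J'\models\phi(\vec a)$; validity of $\delta$ on $J'$ then supplies a disjunct $i$ and a witness $\vec b$ with $J'\models\psi_i(\vec a,\vec b)$. Pushing this forward along $r$ --- again using that conjunctions of atoms and equalities are preserved by any homomorphism --- gives $J\models\psi_i(r(\vec a),r(\vec b))$, and $r(\vec a)=\vec a$ because $\vec a$ lists elements of $\dom(J)$. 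Hence $J$ satisfies the $i$-th disjunct of the consequent at $\vec a$; as $\vec a$ was arbitrary, $\delta$ is valid on $J$.

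For part~(2), I would build the separating dependency explicitly from $J'$. Enumerate $\dom(J')=\{c_1,\dots,c_m\}$, introduce a fresh variable $x_j$ for each $c_j$, and let $\phi_{J'}(x_1,\dots,x_m)$ be the canonical conjunctive query of $J'$ (the conjunction of all atoms $R(x_{j_1},\dots,x_{j_n})$ for which $R(c_{j_1},\dots,c_{j_n})$ holds in $J'$); note that every $x_j$ occurs in $\phi_{J'}$, since by definition every element of $\dom(J')$ occurs in some fact. The candidate dependency is
\[
  \delta\;:\;\forall x_1\cdots\forall x_m\;\Bigl(\phi_{J'}(x_1,\dots,x_m)\;\to\;\bigvee_{1\le i<j\le m} x_i=x_j\Bigr),
\]
a disjunctive target dependency whose disjuncts are plain equalities with no existential quantifiers. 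It fails on $J'$ via the assignment $x_j\mapsto c_j$: this satisfies $\phi_{J'}$ yet makes every disjunct $c_i=c_j$ false, the $c_j$ being pairwise distinct. It holds on $J$: any assignment satisfying $\phi_{J'}$ in $J$ sends each $x_j$ into $\dom(J)$ (since $x_j$ occurs in a relational atom of $\phi_{J'}$), and as $|\dom(J)|<m$ the pigeonhole principle forces two of these images to coincide, so the consequent holds. This exhibits a disjunctive dependency valid on $J$ but not on $J'$, which is what part~(2) asks for.
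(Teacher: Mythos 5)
Your proof is correct and follows essentially the same route as the paper: part (1) is the preservation of disjunctive dependencies under the retraction $r\colon J'\to J$ (which the paper obtains by citing the known preservation result, while you spell the push-forward argument out directly), and part (2) uses exactly the paper's separating dependency, the canonical conjunctive query of $J'$ implying a disjunction of equalities, valid on $J$ because the proper retract has strictly smaller domain. Your explicit verification that $\dom(J)\subsetneq\dom(J')$ is a welcome detail the paper leaves implicit ("strictly fewer nulls"), but it does not change the approach.
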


\begin{proof}
  The first half of the result follows from the fact that $J$ is a
  retract of $J'$ and disjunctive dependencies are preserved when
  going from an instance to one of its retract. This is shown in
  \cite{GottlobNash08:core} for non-disjunctive embedded dependencies, but
  the same argument applies to disjunctive dependencies. 
  To prove the second half, pick fresh variables $\vec{x}$, one for
  each value (constant or null) in the domain of $J'$, and let
  $\psi(\vec{x})$ be the conjunction of all facts that are true in
  $J'$ under the natural assignment.  Consider the disjunctive
  dependency
  $\forall \vec{x}(\psi(\vec{x}) \to \bigvee_{i\neq j}(x_i=x_j))$.
  This disjunctive dependency is clearly not true in $J'$ but it
  is trivially true in $J$, since $J$, being a proper retract of
  $J'$, contains strictly fewer nulls than $J'$. \qed
\end{proof}

Concerning the complexity of computing core
universal solutions, we have the following:

\begin{theorem}[\cite{Fagin05:core}]
  For fixed schema mappings specified by \FOO s-t tgds, given a source
  instance, a core universal solution can be computed in polynomial
  time.
\end{theorem}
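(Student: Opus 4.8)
The plan is to compute the core universal solution in two phases: first construct the \emph{canonical universal solution} $J_0$ for the given source instance $I$, and then compute the core of $J_0$. Since $J_0$ is itself a universal solution, it is homomorphically equivalent to every solution of $I$, and therefore its core is (up to isomorphism) the unique core universal solution; so it suffices to carry out both phases in polynomial time. For the first phase I would exploit that the schema mapping $\cM$ is fixed: for each \FOO s-t tgd $\phi(\vec x)\to\exists\vec y\,\psi(\vec x,\vec y)$ of $\cM$, the antecedent $\phi$ is a fixed first-order formula over $\bS$ that may mention the order $<$, so the set of tuples $\vec a$ with $I\models\phi(\vec a)$ --- evaluating $<$ by its restriction to $\dom(I)$ --- can be computed in time polynomial in $|I|$, by the polynomial data complexity of \FOO. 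For each such $\vec a$ I add to $J_0$ the facts obtained from the conjuncts of $\psi(\vec a,\vec y)$ by substituting for $\vec y$ a tuple $\vec n_{\rho,\vec a}$ of pairwise distinct \emph{fresh} nulls, using a separate fresh tuple for every tgd $\rho$ and every witnessing $\vec a$. The number of firings, hence $|J_0|$, is polynomial in $|I|$, so $J_0$ is built in polynomial time, and the argument underlying Theorem~\ref{thm:existence-usol} shows directly that $J_0\in\Sol_\cM(I)$ and that $J_0$ has a homomorphism into every solution of $I$.

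For the second phase I would use the structural property of $J_0$ that makes core computation tractable, namely that $J_0$ has \emph{bounded block size}. Say that two nulls of $J_0$ are adjacent if they co-occur in a fact, and call a \emph{block} a connected component of the resulting graph on $\dom(J_0)\cap\Vars$. Because distinct firings use disjoint fresh nulls, and the nulls occurring in any single fact of $J_0$ all come from one firing, every block is contained in the null-tuple $\vec n_{\rho,\vec a}$ of a single firing; hence each block has at most $k:=\max_\rho|\vec y_\rho|$ nulls, and at most $c:=\max_\rho|\psi_\rho|$ facts of $J_0$ contain a null from a given block, where $k$ and $c$ are constants depending only on $\cM$. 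Given this bound, I would compute the core of $J_0$ by the algorithm of \cite{Fagin05:core} (see also \cite{GottlobNash08:core}): repeatedly search for a \emph{redundant} block --- intuitively, a block $B$ such that the facts attached to $B$, regarded as a Boolean conjunctive query in which the nulls of $B$ are turned into variables, admit a nontrivial match fixing all constants into the remainder of the instance --- and remove it by applying the resulting retraction, which preserves homomorphic equivalence with $J_0$. Detecting a redundant block is the evaluation of a Boolean conjunctive query of constant size (at most $c$ atoms and at most $c\cdot r$ variables, with $r$ the maximum arity of a target relation) against the current instance, hence polynomial; and since each removal strictly decreases the number of blocks, there are at most $|J_0|$ removals. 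When no redundant block remains the current instance is a core, by the correctness lemma of \cite{Fagin05:core}, and it is then the sought core universal solution.

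Everything specific to \FOO here is routine: permitting the antecedents of the s-t tgds to be arbitrary first-order formulas over the source, with or without the linear order, affects neither the polynomial-time construction of $J_0$ nor the block-size bound, and these are the only two inputs that the core-computation machinery consumes. So the claim reduces to the case of ordinary s-t tgds already handled in \cite{Fagin05:core}. I expect the genuinely delicate ingredient --- the ``hard part'' of any self-contained proof --- to be the correctness lemma behind the elimination procedure: that an instance of bounded block size which is not a core must contain a redundant block, so that greedily deleting redundant blocks can never stall strictly above the core. This is exactly where bounded block size is indispensable; without it, a witnessing non-surjective endomorphism could be forced to collapse unboundedly many nulls at once, and the local, constant-size test for redundancy would no longer be guaranteed to make progress. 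For this lemma I would appeal to \cite{Fagin05:core}.
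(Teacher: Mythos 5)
The paper does not actually prove this theorem: it imports it from \cite{Fagin05:core} and merely remarks that the argument given there for s-t tgds carries over verbatim to \FOO s-t tgds. Your two-phase plan --- build the canonical universal solution by the naive chase (polynomial because the mapping is fixed and \FOO antecedents have polynomial data complexity) and then compute its core exploiting the bounded block size guaranteed by fresh nulls per firing --- is exactly that argument, and your closing observation that the \FOO antecedents only affect the antecedent-evaluation step is precisely the content of the paper's remark. So the architecture is right and matches the intended proof.

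There is, however, a genuine flaw in how you describe the second phase. You call a block $B$ redundant when its facts, with the nulls of $B$ as variables, match into the \emph{remainder} $J_0\setminus B$, you remove whole blocks only, and you invoke as correctness lemma that a non-core instance of bounded block size must contain such a block. With your definition that lemma is false, because the canonical universal solution can fail to be a core by a block collapsing onto a proper part of \emph{itself}: for the LAV s-t tgd $\forall x\,(Px \to \exists y_1 y_2\,(Ry_1y_1 \land Ry_1y_2))$ and source $\{Pa\}$, the canonical universal solution is the single block $\{R(N_1,N_1),\,R(N_1,N_2)\}$, which is not a core (send $N_2$ to $N_1$) yet admits no match into the empty remainder, so your greedy loop stalls strictly above the core $\{R(N_1,N_1)\}$. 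The test actually used in \cite{Fagin05:core} (see also \cite{GottlobNash08:core}) is weaker: look for a \emph{proper endomorphism}, e.g.\ for some fact $f$ a homomorphism of the block of $f$ into $J\setminus\{f\}$ fixing constants but allowed to hit the rest of that block, then replace $J$ by the image of the induced endomorphism and iterate. That check is still a constant-size conjunctive-query evaluation thanks to the block bound, so polynomiality is untouched, but as literally written your procedure computes something coarser than the core, and the ``correctness lemma'' you attribute to \cite{Fagin05:core} is not the one proved there.
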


Strictly speaking, in \cite{Fagin05:core} this was only shown for schema mappings
specified by s-t tgds. However the same argument applies to \FOO s-t tgds.  In
fact, this holds for richer data exchange settings, there the schema
mapping specification may contain also target constraints
(specifically, \emph{target egds} and \emph{weakly acyclic target
  tgds}). Moreover, several algorithms for obtaining core universal
solutions in polynomial time have been proposed.  %

\section{Computing universal solutions with SQL queries}
\label{sec:obtaining-solutions}

There is a discrepancy between the methods for computing universal
solutions commonly presented in the data exchange literature, and the
methods actually employed by data exchange tools. In the data exchange
literature, methods for computing universal solutions are often
presented in the form of a chase procedures. In practical
implementations such as Clio, on the other hand, it is common to
compute universal solutions using SQL queries, thus leveraging the
capabilities of existing DBMSs. We briefly review here both
approaches, and explain how canonical universal solutions can be computed
using SQL queries.

The simplest and most well known method for computing universal
solutions is the \emph{naive chase}\footnote{There are also other, more
sophisticated versions of the chase, but they will not be relevant for
most of what we discuss, since we will be interested in computing solutions
by means of SQL queries anyway. We \emph{will} briefly mention one variant of
the chase later on.}
The algorithm is described in
Figure~\ref{fig:naive-chase}.  For a source instance $I$ and schema
mapping $\cM$ specified by \FOOO s-t tgds, the result of applying the
naive chase is called the \emph{canonical universal solution} of $I$
with respect to $\cM$.  Note that the result of the naive chase is
unique up to isomorphism, since it depends only on the exact choice of
fresh nulls. Also note that, even if two schema mappings are logically
equivalent, they may assign different canonical universal solutions to
a source instance.
  \begin{figure}[t]
  \textsc{Input:} A schema mapping $\cM=(\bS,\bT,\Sigma_{st})$ and a source instance $I$

  \textsc{Output:} A target instance $J$ that is a universal solution for $I$ w.r.t.~$\cM$
  \bigskip

  \begin{algorithmic}
  \STATE $J := \emptyset$;
  \FORALL{ $\forall\vec{x}(\phi(\vec{x})\to\exists\vec{y}.\psi(\vec{x},\vec{y}))\in \Sigma_{st}$}
  \FORALL{tuples of constants $\vec{a}$ such that $I\models\phi(\vec{a})$}
  \STATE pick a fresh null value $N_i$ for each $y_i$ and add the facts
    in $\psi(\vec{a},\vec{N})$ to $J$
  \ENDFOR
  \ENDFOR;
  \RETURN $J$
  \end{algorithmic}
  \caption{Naive chase method for computing universal solutions.}
  \label{fig:naive-chase}
  \end{figure}
  We will now discuss how canonical universal solutions can be
  equivalently computed by means of SQL queries. The idea is very
  simple, and before giving a rigorous presentation, we illustrate it
  by an example.
  Consider the schema mapping specified by the s-t tgds
  \[\begin{array}{lll}
    Rx_1x_2 &\to& \exists y.(Sx_1 y \land Tx_2 y) \\
    Rxx     &\to& Sxx 
  \end{array}\]
 We first Skolemize the dependencies, and split
  them so that the right hand side consists of a single conjunct. In 
  this way, we get
   \[\begin{array}{lll}
   Rx_1x_2 &\to& Sx_1f(x_1,x_2) \\
   Rx_1x_2 &\to& Tx_2f(x_1,x_2) \\
   Rxx &\to& Sxx 
 \end{array}\]
 Next, for each target relation $R$ we collect the dependencies that
 contain $R$ in the right hand side, and we interpret these as
 constituting a definition of $R$. In this way, we get the following definitions of 
 $S$ and $T$:

 \[\begin{array}{lll}
   S &:=& \{(x_1,f(x_1,x_2))\mid Rx_1x_2\} \cup \{(x,x)\mid Rxx\} \\
   T &:=& \{(x_2,f(x_1,x_2))\mid Rx_1x_2\} 
  \end{array}\]
  In general, the definition of a $k$-ary target relation $R\in\bT$ will be of the shape
\begin{equation} \label{eq:interpretation}
  R ~:=~ \{(t_1(\vec{x}), \ldots, t_k(\vec{x}))\mid\phi(\vec{x})\}
  ~\cup~ \cdots ~\cup~ \{(t'_1(\vec{x}'), \ldots, t'_k(\vec{x}))\mid\phi'(\vec{x}')\}
\end{equation}
where $t_1, \ldots, t_k, \ldots, t'_1, \ldots, t'_k$ are terms and
$\phi, \ldots, \phi'$ are first-order queries over the source schema.
Since \FO queries correspond to SQL queries, one can easily use a
relational DBMS in order to compute the tuples in the relation $R$.

  The general idea behind the construction of the \FO queries
  should be clear from the example. However, giving a precise
  definition of what it means to compute a target instance by means of
  SQL queries require a bit of care. We need to assume some structure
  on the set of nulls $\Vars$.  Fix a countably infinite set of
  function symbols of arity $n$, for each $n\geq 0$. For any set $X$,
  denote by $\Terms[X]$ be the set of all \emph{terms} built up from
  elements of $X$ using these function symbols, and denote by
  $\PTerms[X]\subseteq \Terms[X]$ the set of all \emph{proper terms},
  i.e., those with at least one occurrence of a function symbol. For
  instance, if $g$ is a unary function and $h$ is a binary function,
  then $h(g(x),y)$, $g(x)$ and $x$ belong to $\Terms[\{x,y\}]$, but
  only the first two belong to $\PTerms[\{x,y\}]$.  It is important to
  distinguish between proper terms built up from constants on the one
  hand and constants on the other hand, as the former will be treated
  as nulls and the latter not. More precisely, we assume that
  $\PTerms[\Cons]\subseteq\Vars$. Recall that
  $\Cons\cap\Vars=\emptyset$.

\begin{definition}[$L$-term interpretation]
  Let $L$ be any query language. An \emph{$L$-term interpretation} $\Pi$ is a map
  assigning to each $k$-ary relation symbol $R\in\bT$ a union of
  expressions of the form (\ref{eq:interpretation}) where $t_1,
  \ldots, t_k \in \Terms[\vec{x}]$ and $\phi(\vec{x})$ is an $L$-query
  over $\bS$.
\end{definition}

Given a source instance $I$, an $L$-term interpretation $\Pi$ generates an
target instance $\Pi(I)$, in the obvious way. Note that $\Pi(I)$ may
contain constants as well as nulls. Although the program specifies
exactly which nulls are generated, we will consider $\Pi(I)$ only up
to isomorphism, and hence the meaning of an $L$-term interpretation does
not depend on exactly which function symbols it uses.

The previous example shows

\begin{proposition} \label{prop:canon-FO}
  Let $L$ be any query language.
  For every schema mapping specified by $L$ s-t tgds there is
  an $L$-term interpretation that yields for each source instance the canonical
  universal solution.
\end{proposition}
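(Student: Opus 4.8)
The plan is to make the construction from the preceding example fully precise and show it works for an arbitrary schema mapping specified by $L$ s-t tgds. First I would Skolemize each s-t tgd $\forall\vec{x}(\phi(\vec{x})\to\exists\vec{y}.\psi(\vec{x},\vec{y}))$ by replacing each existentially quantified variable $y_i$ with a term $f_i(\vec{x})$, where $f_i$ is a fresh function symbol of arity $|\vec{x}|$ chosen distinctly for each dependency and each $y_i$; the fact that distinct function symbols are used for distinct dependencies is what ensures the Skolemized dependency is logically equivalent to the original over instances that interpret the function symbols freely. After Skolemization, each dependency has the form $\phi(\vec{x})\to\bigwedge_j A_j(\vec{t}^{\,j})$ where each $A_j$ is an atom over $\bT$ and the $\vec{t}^{\,j}$ are tuples of terms from $\Terms[\vec{x}]$; splitting the conjunction on the right, I may assume each dependency has a single target atom $R\,\vec{t}(\vec{x})$ in its head.

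Next, for each $k$-ary target relation symbol $R\in\bT$, I collect all the (Skolemized, split) dependencies whose head atom uses $R$, say $\phi_1(\vec{x}_1)\to R\,\vec{t}_1(\vec{x}_1),\ \ldots,\ \phi_m(\vec{x}_m)\to R\,\vec{t}_m(\vec{x}_m)$, and define the $L$-term interpretation by
\[
  \Pi(R) ~:=~ \{\vec{t}_1(\vec{x}_1)\mid\phi_1(\vec{x}_1)\}~\cup~\cdots~\cup~\{\vec{t}_m(\vec{x}_m)\mid\phi_m(\vec{x}_m)\}.
\]
This is a legal $L$-term interpretation: each $\phi_i$ is an $L$-query over $\bS$ (this uses that the antecedents of $L$ s-t tgds are $L$-queries over $\bS$), and each component of $\vec{t}_i$ lies in $\Terms[\vec{x}_i]$. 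I then have to verify that $\Pi(I)$ is, up to isomorphism, exactly the instance $J$ produced by the naive chase of Figure~\ref{fig:naive-chase} on input $I$. This is a direct matching of the two constructions: a tuple $\vec{a}$ with $I\models\phi_i(\vec{a})$ triggers, in the naive chase, the addition of the fact $R\,\vec{t}_i(\vec{a})$ where the fresh null picked for $y_i$ corresponds precisely to the value of the Skolem term $f_i(\vec{a})$; conversely every fact of $\Pi(I)$ arises this way. The only subtlety is the treatment of terms: a term $t_i(\vec{a})$ with no function symbol is a constant (and is kept as such), while a proper term built from constants lies in $\PTerms[\Cons]\subseteq\Vars$ and so is a genuine null, which is exactly the role played by the freshly invented nulls in the naive chase. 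Since both instances are considered only up to isomorphism, and the naive chase result is unique up to isomorphism (it depends only on the choice of fresh nulls, and two runs of the chase differ only by a renaming of nulls), the identification goes through, and by definition $J$ is the canonical universal solution of $I$ with respect to $\cM$.

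The main obstacle is not any single hard step but rather getting the bookkeeping of Skolemization right so that the resulting term interpretation genuinely reproduces the naive chase: one must be careful that Skolem functions are shared correctly \emph{within} a single dependency (so that the same null $f(x_1,x_2)$ appears in the $S$-fact and the $T$-fact coming from one trigger, as in the example) but \emph{not} across dependencies or across distinct triggers of the same dependency, and that the splitting of conjunctive heads does not break this sharing. Once the correspondence between triggers of the naive chase and tuples in the components of the term interpretation is set up cleanly, the isomorphism $\Pi(I)\cong J$ is immediate, and the proposition follows.
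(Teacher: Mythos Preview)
Your proposal is correct and follows exactly the approach the paper intends: the paper's ``proof'' is simply the phrase ``The previous example shows,'' referring to the Skolemize-then-split-then-collect construction you have spelled out in detail. You have made rigorous precisely what the paper leaves to the reader, including the key bookkeeping point that Skolem functions must be shared within a single dependency (so nulls introduced by the same trigger coincide across conjuncts) but kept distinct across dependencies.
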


Incidentally, even for schema mappings specified by SO tgds, 
as defined in \cite{Fagin05:composing},  $\FO$-term interpretations can be 
constructed that compute the canonical universal solution. However, 
the above suffices for present purposes. 

On the other hand,

\begin{proposition} \label{prop:no-core-program} No \FO-term
  interpretation yields for each source instance the core
  universal solution with respect to the schema mapping specified
  by the \FO (in fact LAV) s-t tgd $Rxy\to \exists z.(Sxz\land Syz)$.
\end{proposition}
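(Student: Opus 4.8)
The plan is to exhibit one fixed source instance $I$ that defeats \emph{every} \FO-term interpretation, by exploiting the genericity of \FO queries. Let $I$ be the disjoint union of two ``symmetric edges'': $\dom(I)=\{a_1,b_1,a_2,b_2\}$ consists of four distinct constants, and $R^I=\{(a_1,b_1),(b_1,a_1),(a_2,b_2),(b_2,a_2)\}$. First I would pin down the core universal solution of $I$. The naive chase produces four nulls, but the two nulls coming from the pair $\{a_1,b_1\}$ (one per direction) have exactly the same $S$-neighbourhood and so collapse onto each other, and likewise for $\{a_2,b_2\}$; the resulting instance $\{Sa_1z_1,\ Sb_1z_1,\ Sa_2z_2,\ Sb_2z_2\}$ with $z_1\neq z_2$ is a retract of the canonical universal solution that admits only the identity endomorphism, hence it is the core. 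Consequently any instance isomorphic to it has exactly two nulls, one — say $w_1$ — adjacent to $\{a_1,b_1\}$ and the other — $w_2$ — adjacent to $\{a_2,b_2\}$, because an isomorphism fixes every constant. (A single symmetric edge would not suffice: its core is produced already by the degenerate interpretation $S:=\{(x,c())\mid Rxy\}$ with $c$ a nullary function symbol, which however fails on $I$; so at least two edges are needed.)

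The second ingredient is a genericity lemma for term interpretations: for any \FO-term interpretation $\Pi$, any source instance $J$, and any permutation $\rho$ of $\dom(J)$ that preserves all relations of $J$, the map $\widehat\rho$ obtained by letting $\rho$ act on constants while leaving function symbols untouched is a well-defined bijection of $\dom(\Pi(J))$ that both preserves and reflects the facts of $\Pi(J)$. The reason is that $\rho$ is an isomorphism of $J$ onto itself in the model-theoretic sense, and \FO queries, containing no constants, are invariant under such isomorphisms; hence whenever a disjunct of $\Pi$ fires on a tuple $\vec c$ it also fires on $\rho\vec c$, and the fact it contributes on $\rho\vec c$ is exactly the $\widehat\rho$-image of the fact it contributes on $\vec c$. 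Note that $\widehat\rho$ need not be an isomorphism in the sense of the paper, since $\rho$ may move constants.

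Now assume for contradiction that $\Pi(I)$ is isomorphic to the core of $I$. Applying the lemma to the transposition $\sigma_1$ of $a_1$ and $b_1$ (fixing $a_2,b_2$), which preserves $R^I$ because $R^I$ is symmetric, we get that $\widehat{\sigma_1}$ is a fact-preserving bijection of $\Pi(I)$, so it must send the unique null adjacent to $\{a_1,b_1\}$ to itself, that is, $\widehat{\sigma_1}(w_1)=w_1$. Since $\widehat{\sigma_1}$ acts on the term $w_1$ only by relabelling its constant leaves, this forces $w_1$ to contain no leaf equal to $a_1$ or $b_1$; the same argument with the transposition $\sigma_2$ of $a_2,b_2$ shows $w_1$ contains no leaf equal to $a_2$ or $b_2$ either. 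Hence $w_1$ is a \emph{closed} term, built purely out of function symbols. But then apply the lemma to the permutation $\pi=(a_1\,a_2)(b_1\,b_2)$ swapping the two edges: on the one hand $\widehat\pi$ fixes the closed term $w_1$; on the other hand, being fact-preserving, it must send $w_1$ to the null adjacent to $\{\pi(a_1),\pi(b_1)\}=\{a_2,b_2\}$, which is $w_2$. Thus $w_1=w_2$, contradicting the fact that the core of $I$ has two distinct nulls.

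I expect the crux to be precisely this passage through closed terms. The naive symmetry argument — ``the core breaks a symmetry that an \FO definition must respect'' — does not apply directly, because the within-edge transposition $\sigma_1$ moves constants and hence is not an isomorphism in the sense of the paper; the correct statement is that \FO queries are invariant under \emph{arbitrary} relation-preserving permutations of the active domain, and the work is to combine several such permutations, namely the two within-edge transpositions (which pin the generated nulls down to closed terms) and the edge-swap (which then forces those closed terms to coincide). One further routine point, should \FO queries be permitted to mention constants, is to choose $a_1,b_1,a_2,b_2$ outside the finite set of constants occurring in $\Pi$, after which those constants are inert and the same permutations do the job.
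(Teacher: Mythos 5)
Your proof is correct, and it rests on the same key fact as the paper's own argument: an \FO-term interpretation is generic, i.e., invariant under relation-preserving permutations of the source domain acting on the constant leaves of the generated terms, combined with a case analysis on whether the null-producing term mentions constants of the instance. The paper runs this on a single witness instance (the total binary relation on four constants, splitting directly into the two cases ``term without constants'' vs.\ ``term with constants''), whereas you run it on two disjoint symmetric edges, first using the within-edge transpositions to force the nulls to be closed terms and then using the edge swap to collapse them --- essentially the same approach, differently packaged.
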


\begin{proof}The argument uses the fact that \FO formulas are
  invariant for automorphisms.  Let $I$ be the source instance whose
  domain consists of the constants $a,b,c,d$, and such that $R$ is the
  total relation over this domain. Note that every permutation of the
  domain is an automorphism of $I$. Suppose for the sake of
  contradiction that there is an \FO-term interpretation $\Pi$ such
  that the $\Pi(I)$ is the core universal solution of $I$. Then the
  domain of $\Pi(I)$ consists of the constants $a,b,c,d$ and a
  distinct null term, call it $N_{\{x,y\}}\in\PTerms[\vec{x}]$, for
  each pair of distinct constants $x,y\in\{a,b,c,d\}$, and $\Pi(I)$
  contains the facts $RxN_{\{x,y\}}$ and $RyN_{\{x,y\}}$ for each of
  these nulls $N_{\{x,y\}}$.  Now consider the term $N_{\{a,b\}}$. We
  can distinguish two cases. The first case is where the term
  $N_{\{a,b\}}$ does not contain any constants as arguments.  In this
  case, it follows from the invariance of \FO formulas for
  automorphisms that $\Pi(I)$ contains $RxN_{\{a,b\}}$ for every
  $x\in\{a,b,c,d\}$, which is clearly not true. The second case is
  where $N_{\{a,b\}}$ contains at least one constant as an
  argument. If $N_{\{a,b\}}$ contains the constant $a$ or $b$ then let
  $t'$ be obtained by switching all occurrences of $a$ and $b$ in
  $N_{\{a,b\}}$, otherwise let $t'$ be obtained by switching all
  occurrences of $c$ and $d$ in $N_{\{a,b\}}$. Either way, we obtain
  that there is a second null, namely $t'$, which is distinct from
  $N_{\{a,b\}}$, and which stands in exactly the same relations to $a$
  and $b$ as $N_{\{a,b\}}$ does. This again contradicts our assumption
  that $J$ is the core universal solution of $I$.
\end{proof}

Things change in the presence of a linear order. We will show that
every schema mapping specified by \FOO s-t tgds is logically
equivalent to a laconic schema mapping specified by \FOO s-t tgds,
i.e., one for which the canonical universal solution is always a
core. In particular, given Proposition~\ref{prop:canon-FO}, this
shows:

\begin{theorem}\label{thm:core-FO}
  For every schema mapping specified by \FOO s-t tgds there is a
  \FOO-term interpretation that yields for each source instance the core
  universal solution.
\end{theorem}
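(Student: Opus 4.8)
The plan is to reduce Theorem~\ref{thm:core-FO} to the claim advertised just above it, namely that every schema mapping $\cM$ specified by \FOO s-t tgds is logically equivalent to a \emph{laconic} schema mapping $\cM'$ specified by \FOO s-t tgds, where laconic means that the canonical universal solution is always the core. Granting this claim, the theorem follows immediately: by Proposition~\ref{prop:canon-FO} applied to $\cM'$ (with $L = \FOO$), there is an \FOO-term interpretation $\Pi$ that yields, for each source instance $I$, the canonical universal solution of $I$ with respect to $\cM'$; by laconicity this canonical universal solution is the core universal solution with respect to $\cM'$; and since $\cM$ and $\cM'$ are logically equivalent they have the same solution sets, hence the same universal solutions up to homomorphic equivalence, hence \emph{the} core universal solution with respect to $\cM'$ is, up to isomorphism, the core universal solution with respect to $\cM$ (cores of homomorphically equivalent instances are isomorphic, and by \cite{Fagin05:core} the core is unique for \FO s-t tgds, a fortiori for \FOO s-t tgds). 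So $\Pi$ is the desired \FOO-term interpretation for $\cM$. Thus essentially all the content is in producing the laconic equivalent, which is the business of Section~\ref{sec:transformation}.

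The structure of the argument for the laconic-equivalence claim that I would carry out is as follows. First, analyze which nulls of the canonical universal solution survive in the core: a null block (a connected component of the fact graph generated by a single application of some s-t tgd) is redundant precisely when it can be homomorphically mapped into the rest of the solution; this ``coverage'' condition, for a fixed source instance, is expressible by a first-order condition on the source — here is where \FO antecedents, and the linear order, will be needed to break symmetries of the kind exploited in Proposition~\ref{prop:no-core-program}. Second, for each s-t tgd $\phi(\vec x)\to\exists\vec y\,\psi(\vec x,\vec y)$ of $\cM$, replace its antecedent $\phi$ by a strengthened \FOO antecedent $\phi^*$ that holds exactly on those tuples $\vec a$ whose generated null block is \emph{not} covered, and possibly also add new s-t tgds (again with \FOO antecedents) that directly produce the facts that would have come from covered blocks but are needed because they contain only constants, or are needed to patch up partial homomorphic images. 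Third, verify two things about the resulting $\cM'$: (a) $\cM'$ is logically equivalent to $\cM$, i.e. the set of solutions is unchanged — this requires showing the new dependencies are sound (every old solution satisfies them) and complete (they still force enough), and (b) $\cM'$ is laconic, i.e. running the naive chase on $\cM'$ produces a solution with no proper retract, which amounts to showing no surviving null block maps into the rest and no two blocks collapse.

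The linear order enters in an essential, localized way and I would make that explicit: when deciding which of several symmetric null blocks to keep, one uses $<$ to pick a canonical representative (e.g. the lexicographically least tuple $\vec a$ among those generating a given block up to the relevant symmetry), so that exactly one copy is retained; the necessity of $<$ for this, and the impossibility of doing it in pure \FO, is exactly what Proposition~\ref{prop:no-core-program} records, so the role of $<$ here is not optional. The main obstacle, and where the real work lies, is Step~2–3: correctly describing, in \FOO, the coverage relation between null blocks — including the case where a block is covered only by the \emph{union} of constant facts and other blocks, and the case where covering requires identifying nulls across blocks — and then checking that the strengthened dependencies neither lose solutions (soundness/completeness of the equivalence) nor leave behind any collapsible structure (laconicity). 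Everything after that, including the final deduction of Theorem~\ref{thm:core-FO} from the laconic equivalent via Proposition~\ref{prop:canon-FO}, is routine bookkeeping.
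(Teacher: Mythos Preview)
Your reduction of the theorem to the laconicity claim together with Proposition~\ref{prop:canon-FO} is exactly the paper's argument, and your outline of the laconicity construction hits the same key ingredients (block-level analysis, \FOO-definable preconditions, and linear-order side conditions to break symmetries). The paper organizes the construction around an explicit enumeration of \emph{f-block types}---all core-shaped sub-blocks obtainable from right-hand sides, possibly exponentially many---each getting its own precondition and side condition, rather than around strengthening antecedents of the original tgds plus ad hoc ``patching'' tgds, but this is a difference of bookkeeping rather than of idea.
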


In the case of the example from
Proposition~\ref{prop:no-core-program},
the \FOO-term interpretation $\Pi$ computing the core universal
solution
is given by 
\[\begin{split}\Pi(S) = \{(x_1,f(x_1,x_2)) \mid (Rx_1x_2 \lor Rx_2x_1)
\land x_1 \leq x_2 \} \\ {~\cup~} \{(x_2,f(x_1,x_2)) \mid (Rx_1x_2 \lor Rx_2x_1)
\land x_1 \leq x_2 \}\end{split}\]

Furthermore, we will show that every schema mapping defined by 
\FO s-t tgds whose right-hand-side contains at most one atomic 
formula is equivalent to a laconic schema mapping specified by
\FO s-t tgds, and therefore, its core universal solutions can be computed
by means of an \FO-term interpretation. 
In other words, in this case the linear order is not needed. Note that
in the example from Proposition~\ref{prop:no-core-program}, the right-hand-size of the 
s-t tgd consists of two atomic formulas.

In the next section, we formally introduce the notion of laconicity.
In Section~\ref{sec:transformation}, we show that every schema mapping
specified by \FOO s-t tgds is logically equivalent to a laconic schema
mapping specified by \FOO s-t tgds.

\section{Laconicity}\label{sec:laconicity}

A schema mapping is laconic if the canonical universal solution of a
source instance coincides with the core universal solution.  In
particular, for laconic schema mappings the core universal solution
can be computed using any method for computing canonical universal
solutions, such as the ones described in
Section~\ref{sec:obtaining-solutions}. In this section, we discuss
some examples and general observations concerning laconicity, in order
to make the reader familiar with the notion.  In the next section we
will focus on constructing laconic schema mappings.  In particular, we
will show there that every schema mapping specified by \FOO s-t tgds
is logically equivalent to a laconic schema mapping specified by \FOO
s-t tgds.

\begin{definition}[Laconicity]
  A schema mapping is \emph{laconic} if
  for every source instance $I$, the canonical universal solution of
  $I$ with respect to $\cM$ is a core.
\end{definition}

Note that the definition only makes sense for schema mappings
specified by \FOOO s-t tgds, because we have defined the notion of a
canonical universal solution only for such schema mappings.

\begin{figure*}[t]
\begin{center}
\begin{tabular}{l@{\quad}l@{\qquad}l@{\quad}l}
\multicolumn{2}{l}{\emph{Non-laconic schema mapping}} &
\multicolumn{2}{l}{\emph{Logically equivalent laconic schema mapping}} \\
\\
(a) & $Px \to \exists yz.Rxy\land Rxz$ & (a$'$) & $Px \to \exists y.Rxy$ \\
\\
(b) & $Px \to \exists y.Rxy$ & (b$'$) & $Px \to Rxx$ \\
    & $Px \to Rxx$ & \\
\\
(c) & $Rxy \to Sxy$ & (c$'$) & $Rxy \to Sxy$ \\
    & $Px \to \exists y.Sxy$  && $Px\land\neg \exists y.Rxy\to \exists y.Sxy$ \\
\\
(d) & $Rxy\to\exists z.Sxyz$ & (d$'$) & $Rxy\land x\neq y \to \exists z.Sxyz$ \\
    & $Rxx\to Sxxx$ && $Rxx \to Sxxx$ \\
\\
(e) & $Rxy\to \exists z.(Sxz\land Syz)$ & (e$'$) & $(Rxy\lor Ryx) \land
x\leq y \to \exists z.(Sxz\land Syz)$ 
\end{tabular}
\end{center}
\caption{Examples of non-laconic schema mappings and their laconic
  equivalents.} \label{fig:examples}
\end{figure*}

Examples of laconic and non-laconic schema mappings are given in
Figure~\ref{fig:examples}.  It is easy to see that every schema
mapping specified by full s-t tgds only (i.e., s-t tgds without
existential quantifiers) is laconic. Indeed, in this case, the
canonical universal solution does not contain any nulls, and hence is
guaranteed to be a core. Thus, being specified by full s-t tgds is a
sufficient condition for laconicity, although a rather uninteresting
one. The following provides us with a necessary condition, which
explains why the schema mapping in Figure~\ref{fig:examples}(a) is not
laconic. Given an s-t tgd
$\forall\vec{x}(\phi\to\exists\vec{y}.\psi)$, by \emph{the canonical
  instance of $\psi$}, we will mean the target instance whose facts
are the conjuncts of $\psi$, where the $\vec{x}$ variables are treated
as constants and the $\vec{y}$ variables as nulls.

\begin{proposition}
  If a schema mapping $(\bS,\bT,\Sigma_{st})$ specified by s-t tgds is
  laconic, then for each s-t tgd
  $\forall\vec{x}(\phi\to\exists\vec{y}.\psi)\in\Sigma_{st}$, the
  canonical instance of $\psi$ is a core.
\end{proposition}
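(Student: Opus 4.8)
The plan is to prove the contrapositive: if for some s-t tgd $\forall\vec{x}(\phi\to\exists\vec{y}.\psi)\in\Sigma_{st}$ the canonical instance of $\psi$ is not a core, then the schema mapping is not laconic. So suppose $K$ is the canonical instance of $\psi$ (with the $\vec{x}$ treated as constants and the $\vec{y}$ as nulls) and $K$ is not a core. Then there is a proper retract $K'\subsetneq K$, witnessed by a retraction $h:K\to K'$ that is the identity on $\dom(K')$; in particular $h$ moves at least one null of $K$.

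First I would construct a source instance that isolates this tgd. Let $I$ be the canonical instance of $\phi$, i.e., the source instance whose facts are the conjuncts of $\phi$, with the variables $\vec{x}$ treated as distinct constants $\vec{a}$ (if $\phi$ forces some variables to be equal we quotient accordingly; for the case of plain s-t tgds where $\phi$ is a conjunction of atoms this is straightforward). By construction $I\models\phi(\vec{a})$. Next I would run the naive chase of Figure~\ref{fig:naive-chase} on $I$ to obtain the canonical universal solution $J$. The chase fires the given tgd on the tuple $\vec{a}$, and this firing contributes exactly an isomorphic copy of $K$ to $J$ (with fresh nulls $\vec{N}$ in place of $\vec{y}$).

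The key step is then to argue that $J$ is not a core, which by the definition of laconicity immediately shows $\cM$ is not laconic. I would show that the retraction $h$ on the sub-copy of $K$ inside $J$ extends to a (proper) retraction of all of $J$: define $\hat h$ to act as $h$ on the nulls $\vec{N}$ coming from this firing and as the identity everywhere else. One must check $\hat h$ is a homomorphism $J\to J$ whose image is a proper subinstance. The main obstacle — and the point requiring care — is that other firings of the chase (of this or other tgds) might contain facts involving the nulls $\vec{N}$, so that collapsing them via $h$ need not land back in $J$. This is where one wants $h$ to be the identity on $\dom(K')$ (so constants $\vec{a}$ are fixed) and where one needs to observe that a \emph{fresh} null $N_i$ introduced for $y_i$ in one firing does not occur in any other firing's output, because each firing picks its own fresh nulls. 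Hence no fact outside the image copy of $K$ mentions any $N_i$, so $\hat h$ restricted to those facts is the identity and trivially maps them into $J$; on the image copy of $K$, $\hat h=h$ maps into $K'\subseteq J$. Finally, since $h$ moves some null and no other fact reintroduces it, the image $\hat h(J)$ omits at least one fact of $J$, so it is a proper retract, and $J$ is not a core. This completes the contrapositive and hence the proposition.

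One subtlety worth flagging: we must rule out the possibility that the very fact "lost" under $h$ in the copy of $K$ is independently generated by another firing of the chase with the \emph{same} null. But fresh nulls are distinct across firings by construction of the naive chase, so a fact of the form (say) $R\,a\,N_i$ with $N_i$ one of our fresh nulls can only have been produced by the firing that introduced $N_i$; thus it is genuinely absent from $\hat h(J)$. This observation is exactly what makes the single-firing argument go through without having to analyze the global structure of $J$.
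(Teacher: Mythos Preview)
Your proof is correct and follows essentially the same approach as the paper's: both argue by contraposition, observe that the nulls introduced by a single firing of the offending tgd do not occur in any fact produced by another firing, and use this to extend the retraction of the canonical instance of $\psi$ to a homomorphism from the canonical universal solution onto a proper subinstance. The only cosmetic difference is that you exhibit a specific witness source instance (the canonical instance of $\phi$), whereas the paper works with an arbitrary source instance in which $\phi$ is satisfied; either choice suffices.
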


\begin{proof}
  We argue by contraposition. Suppose the canonical instance $J$ of
  $\psi$ is not a core.  Let $J'\subseteq J$ be the core of $J$ and
  $h:J\to J'$ the corresponding retraction.

  Take any source instance $I$ in which $\phi$ is satisfied under an
  assignment $g$, and let $K$ be the canonical universal solution of
  $I$.  Since $\phi$ is true in $I$ under the assignment $g$ and by
  the construction of the canonical universal solution, we have that
  $g$ extends to a homomorphism $\widehat{g}:J\to K$ sending the
  $\vec{y}$ values to disjoint nulls. In fact, we may assume without
  loss of generality that $\widehat{g}(y_i)=y_i$ for each $y_i\in
  \vec{y}$. Moreover, by the construction of canonical universal
  solutions these null values will not play any further role in
  subsequent steps of the chase. In particular, they do not
  participate in any facts of $K$ other than those in the
  $\widehat{g}$-image of $J$. By the $\widehat{g}$-image of $J$
  we mean the subinstance of $K$ containing those facts that are
  in the image of the homomorphism $\widehat{g}:J\to K$. 

  Finally, let $K'$ be the subinstance of $K$ in which the
  $\widehat{g}$-image of $J$ is replaced by the $\widehat{g}$-image 
  of $J'$. Then $h:J\to J'$ naturally extends to a homomorphism
  $h':K\to K'$. Since $K'$ is a proper subinstance of $K$, we conclude
  that $K$ is not a core, and therefore, $\cM$ is not laconic.\qed
\end{proof}

In the case of schema mapping (e) in Figure~\ref{fig:examples}, the
linear order is used in order to obtain a logically equivalent
laconic schema mapping (e$'$). Note that the schema mapping (e$'$) is
\emph{order-invariant} in the sense that the set of solutions of a
source instance $I$ does not depend on the interpretation of the $<$
relation in $I$, as long as it is a linear order. Still, the use of the
linear order cannot be avoided, as follows from
Proposition~\ref{prop:no-core-program}.  What is really going on, in
this example, is that the right hand side of (e) has a non-trivial
automorphism (viz.~the map sending $x$ to $y$ and vice versa), and the
conjunct $x\leq y$ in the antecedent of (e$'$) plays, intuitively, the
role of a tie-breaker, cf.~Section~\ref{sec:side-conditions}.

Testing whether a given schema mapping is laconic is not a tractable
problem:

\begin{proposition}\label{prop:complexity-laconicity}
  Testing laconicity of schema mappings specified by \FO s-t tgds is
  undecidable. It is NP-hard already for schema mappings specified by
  LAV s-t tgds.
\end{proposition}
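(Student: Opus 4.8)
The plan is to establish the two parts of Proposition~\ref{prop:complexity-laconicity} separately, since they require different reductions, but both exploit the same underlying idea: laconicity forces the canonical universal solution to be a core, so violations of laconicity correspond exactly to the existence of a nontrivial homomorphism (a ``fold'') on some canonical universal solution. For the NP-hardness part, I would reduce from (the complement of) a graph problem, most naturally 3-colorability or the core-recognition problem for graphs. Given a graph $G$, design a source schema with a single binary relation and a target schema so that a LAV s-t tgd $Exy \to \exists\vec{z}.\psi(x,y,\vec{z})$ encodes $G$ in its right-hand side: take $\psi$ to be (a relational encoding of) the graph $G$ itself, with the two ``port'' variables $x,y$ being the only ones that get exported as constants and the rest as nulls. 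Then, by Proposition~12 (the necessary condition just stated), if the canonical instance of $\psi$ — which is essentially a copy of $G$ with two distinguished vertices frozen — is not a core, the schema mapping is not laconic; conversely one arranges that the only source instance that matters is a single $E$-edge, so that the canonical universal solution literally is the canonical instance of $\psi$. Since recognizing whether a graph (even with a couple of constants, i.e.\ loops marking distinguished vertices) is a core is coNP-hard, testing laconicity is coNP-hard, hence NP-hard; a small amount of care is needed to make sure the ``frozen'' vertices do not trivialize the folding question, which is why one keeps them genuinely present in $G$ but uses the single-edge source so nothing else is added to the canonical solution.

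For the undecidability part with full \FO\ s-t tgds, the extra power of \FO\ antecedents lets me encode a genuinely undecidable problem rather than merely an NP-hard one. The natural target is to reduce from the halting problem or, more conveniently, from finite satisfiability / containment-type problems for \FO\ over finite structures, or from the embedding/homomorphism problems that are known to be undecidable once one can quantify the source side richly enough. Concretely, I would use an \FO\ antecedent $\phi(\vec{x})$ that ``fires'' on a source instance exactly when that source instance (or some derived relation computable in \FO) fails to have a certain property, so that the canonical universal solution acquires, or fails to acquire, a redundant null-witnessed fact precisely on those inputs; then the canonical universal solution is a core for \emph{all} source instances iff the property holds universally, which is undecidable. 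Alternatively, and perhaps cleaner, one can piggyback on undecidability results already available for data exchange — e.g.\ undecidability of whether a schema mapping specified by \FO\ s-t tgds is equivalent to one specified by plain tgds, or undecidability of absolute consistency — by massaging such a schema mapping so that laconicity becomes equivalent to the undecidable property in question.

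The main obstacle, and the place where the construction has to be handled carefully, is controlling the interaction between different s-t tgds (and different firings of the same tgd) in the chase: laconicity is a property of the \emph{whole} canonical universal solution, not of any single tgd's output, so a fact produced by one dependency might fold into a fact produced by another, creating redundancy that has nothing to do with the gadget I am trying to isolate. To keep the reduction tight I would use disjoint target sub-schemas (or disjoint ``colors'' on the target relations) for the gadget part versus any bookkeeping part, and ensure the source instance of interest is as small as possible (ideally a single fact, as in the proof of Proposition~11), so that the canonical universal solution is, up to isomorphism, exactly the canonical instance of the relevant right-hand side. With that separation in place, Proposition~12 does most of the work for the NP-hardness direction, and for undecidability the correctness argument reduces to showing that the gadget fires (and hence destroys coreness) on precisely the ``bad'' source instances, which follows from the semantics of the \FO\ antecedent by a routine induction on the chase.
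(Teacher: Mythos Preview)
Your overall strategy matches the paper's: reduce undecidability from finite satisfiability of \FO (Trakhtenbrot), and reduce the LAV lower bound from the core-recognition problem for graphs. But both parts of your proposal have gaps relative to the paper's execution.

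For undecidability, the paper's construction is a one-liner that you do not quite reach: given an \FO formula $\phi(\vec{x})$, take the single dependency $\forall\vec{x}\,(\phi(\vec{x})\to\exists y_1 y_2.(Py_1\land Py_2))$. If $\phi$ is unsatisfiable on finite instances, the canonical universal solution is empty for every source, hence a core; if $\phi$ is satisfiable somewhere, the canonical solution there contains two distinct $P$-nulls and is not a core. There is no need for gadgets, bookkeeping sub-schemas, or piggybacking on other undecidability results for data exchange; the redundancy is built directly into the consequent.

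For the LAV lower bound, your construction $Exy\to\exists\vec{z}.\psi(x,y,\vec{z})$ with two ``frozen'' ports does not give the equivalence you claim. Freezing two specific vertices of $G$ as constants changes the core question: the canonical instance of $\psi$ is a core iff $G$ has no proper retraction \emph{fixing those two vertices}, which is not the same as $G$ being a core. You also need laconicity to hold for \emph{every} source instance, not just a single edge, and with a binary source relation you must deal with reflexive edges ($x=y$, collapsing the two ports) and multiple edges sharing endpoints. The paper sidesteps all of this by using a \emph{unary} source relation $P$ and the dependency $\forall x\,(Px\to\exists\vec{y}.(\phi_G(\vec{y})\land\bigwedge_i Rxy_i))$, where $\phi_G$ is the canonical conjunctive query of $G$. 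The single constant $x$ is attached symmetrically to every null via the extra $Rxy_i$ atoms, so any retraction of the canonical instance must fix $x$, must send nulls to nulls (there is no $Rxx$ fact), and therefore restricts to a retraction of $G$; conversely any retraction of $G$ lifts. Multiple source facts $Pa,Pb,\ldots$ just create disjoint fact blocks, each a core iff $G$ is. That symmetric-attachment trick is the idea your proposal is missing.

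Finally, ``coNP-hard, hence NP-hard'' is not a valid inference; you should align the direction of the reduction with the hardness class actually asserted in the proposition.
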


\begin{proof}
  The first claim is proved by a reduction from the  
  satisfiability problem for first-order logic on finite
  instances, which is undecidable by Trakhtenbrot's theorem.
  For any first-order formula $\phi(\vec{x})$, let $\cM_\phi$
  be the schema mapping containing only one dependency,
  namely $\forall\vec{x}(\phi(\vec{x})\to\exists y_1 y_2.(Py_1\land Py_2))$. 
  It is easy to see that 
  $\cM_\phi$ is laconic iff $\phi$ is not satisfiable. 

  The NP-hardness in the case of LAV mappings is proved by a reduction
  from the core testing problem (given a graph, is it a core), which
  is known to be NP-complete \cite{Hell92:core}. Consider any graph $G=(V,E)$ and let
  $\exists\vec{y}.\phi(\vec{y})$ be the Boolean canonical
  conjunctive query of $G$. Let $\cM_G$ be the schema mapping whose
  only dependency is $\forall x.(Px \to
  \exists\vec{y}.(\phi(\vec{y})\land\bigwedge_i Rxy_i)$.  Then $\cM_G$
  is laconic iff $G$ is a core.
\qed
\end{proof}

\section{Making schema mappings laconic}\label{sec:transformation}

In this section, we present a procedure for transforming any schema
mapping $\cM$ specified by \FOO s-t tgds into a logically equivalent
\emph{laconic} schema mapping $\cM'$ specified by \FOO s-t tgds.  To
simplify the notation, throughout this section, we assume a fixed
input schema mapping $\cM=(\bS,\bT,\Sigma_{st})$, with $\Sigma_{st}$ a
finite set of \FOO s-t tgds. Moreover, we will assume that the \FOO
s-t tgds $\forall\vec{x}(\phi\to\exists\vec{y}.\psi)\in\Sigma_{st}$
are non-decomposable \cite{Fagin05:core}, meaning that the fact graph
of $\exists\vec{y}.\phi(\vec{x},\vec{y})$ (where the facts are the
conjuncts of $\phi$ and two facts are connected if they have an
existentially quantified variable in common) is connected. This
assumption is harmless: every \FOO s-t tgd can be decomposed into a
logically equivalent finite set of non-decomposable \FOO s-t tgds
(with identical left-hand-sides, one for each connected component of
the fact graph) in polynomial time.

The outline of the procedure for making schema mappings laconic is as
follows (the items correspond to subsections of the present section):

\begin{enumerate}
\item Construct a finite list ``fact block types'': 
  descriptions of potential fact blocks in core universal solutions.
\item Compute for each of the fact block types a precondition: a
  first-order formula over the source schema that tells exactly when
  the core universal solution will contain a fact block of the given type.
\item If any of the fact block types has non-trivial
  automorphisms, add an additional side condition, consisting of a
    Boolean combination of formulas of the form $x_i<x_j$, in
  order to avoid that multiple copies of the same fact block
  are created in the canonical universal solution.
\item Construct the new schema mapping $\cM'=(\bS,\bT,\Sigma'_{st})$,
  where $\Sigma'_{st}$ contains an \FOO s-t tgd for each of the
  fact block types. The left-hand-side of the \FOO s-t tgd is
  the conjunction of the precondition and side condition of the
  respective fact block type, while the right-hand-side is
  the fact block type itself.
\end{enumerate}
We illustrate the approach by means of an example. The technical
notions that we use in discussing the example will be formally defined
in the next subsections.

\begin{example}
  Consider the schema mapping $\cM=(\{P,Q\},\{R_1,R_2\},\Sigma_{st})$,
  where $\Sigma_{st}$ consists of the dependencies 
  \[\begin{array}{l}Px \to \exists y.R_1xy  \\
    Qx \to \exists yzu.(R_2xy\land R_2zy\land R_1zu)\end{array}\]
  In this case, there are exactly three relevant fact block
  types. They are listed below, together with their preconditions.
  \[\begin{array}{lll@{\quad}l}
    \multicolumn{3}{l}{\textit{Fact block type}} & \textit{Precondition} \smallskip\\
    t_1(x;y) &=& \{R_1xy\} & pre_{t_1}(x) = Px \\
    t_2(x;yzu) &=& \{R_2xy, R_2zy, R_1zu\} & pre_{t_2}(x) = Qx\land\neg Px \\
    t_3(x;y) &=& \{R_2xy\} & pre_{t_3}(x) = Qx\land Px
  \end{array}\]
  We use the notation $t(\vec{x};\vec{y})$ for a fact block type to
  indicate that the variables $\vec{x}$ stand for constants and the
  variables $\vec{y}$ stand for distinct nulls.  

  As it happens, the above fact block types have no
  non-trivial automorphisms. Hence, no side conditions need to be
  added, and $\Sigma'_{st}$ will consist of the following \FO s-t
  tgds:
  \[\begin{array}{lll}
    Px &\to& \exists y.R_1xy  \\
    Qx\land\neg Px &\to& \exists yzu.(R_2xy\land R_2zy\land R_1zu) \\
    Qx\land Px &\to& \exists y.(R_2xy) 
   \end{array}\]
The reader may verify that in this case, the obtained schema mapping
is indeed laconic. We will prove in Section~\ref{sec:putting} that
the output of our transformation is guaranteed to be a laconic schema
mapping that is logically equivalent to the input schema mapping.
\hfill$\dashv$
\end{example}

We will now proceed to define all the notions appearing in this
example.

\subsection{Generating the fact block types}

Recall that the fact graph of an instance is the graph
whose nodes are the facts of the instance, and such that there is an
edge between two facts if they have a null value in common. A
\emph{fact block}, or \emph{f-block} for short, of an instance
is a connected component of the fact graph of the instance. We
know from \cite{Fagin08:towards} that, for any schema
mapping $\cM$ specified by \FOO s-t tgds, the size of f-blocks in core
universal solutions for $\cM$ is bounded. Consequently, there is
a finite number of f-block types, such that every core universal solution
consists of f-blocks of these types. This is a crucial observation
that we will exploit in our construction.

Formally, an \emph{f-block type} $t(\vec{x};\vec{y})$ will be a finite
set of atomic formulas in $\vec{x}, \vec{y}$, where $\vec{x}$ and
$\vec{y}$ are disjoint sets of variables. We will refer to $\vec{x}$
as the \emph{constant variables} of $t$ and $\vec{y}$ as the
\emph{null variables}. We say that an f-block type
$t(\vec{x};\vec{y})$ is a renaming of an f-block type
$t'(\vec{x}';\vec{y}')$ if there is a bijection $f$ between $\vec{x}$
and $\vec{x}'$ and between $\vec{y}$ and $\vec{y}'$, such that $t'=\{
R(f(\vec{v})) \mid R(\vec{v})\in t\}$. In this case, we write
$f:t\cong t'$ and we call $f$ also a renaming. We will not distinguish
between f-block types that are renamings of each other.  We say that
an f-block $B$ has type $t(\vec{x};\vec{y})$ if $B$ can be obtained
from $t(\vec{x};\vec{y})$ by replacing constant variables by constants
and null variables to distinct nulls, i.e., if $B=t(\vec{a},\vec{N})$
for some sequence of constants $\vec{a}$ and sequence of distinct
nulls $\vec{N}$. Note that we require the relevant substitution to be
injective on the null variables but not necessarily on the constant
variables. If a target instance $J$ contains a block
$B=t(\vec{a},\vec{N})$ of type $t(\vec{x};\vec{y})$ then we say that
$t(\vec{x};\vec{y})$ is \emph{realized} in $J$ \emph{at} $\vec{a}$.
Note that, in general, an f-block type may be realized more than
once at a tuple of constants $\vec{a}$, but this will not happen if
the target instance $J$ is a core universal solution.

We are interested in the f-block types that may be realized in core
universal solutions. Eventually, the schema mapping $\cM'$ that we
will construct from $\cM$ will contain an \FOO s-t tgd for each
relevant f-block type. Not every f-block type as defined above can be
realized.  We may restrict attention to a subclass. Below, by the
canonical instance of an f-block type $t(\vec{x};\vec{y})$ we will
mean the instance containing the facts in $t(\vec{x};\vec{y})$,
considering $\vec{x}$ as constants and $\vec{y}$ as nulls.

\begin{definition}\label{def:plausible}
  The set $\textsc{Types}_\cM$ of f-block types generated by $\cM$
  consists of all f-block types $t(\vec{x};\vec{y})$ satisfying the
  following conditions:
\begin{enumerate}[(a)]
\item $\Sigma_{st}$ contains an \FOO s-t tgd
  $\forall\vec{x}'(\phi(\vec{x}')\to\exists\vec{y}'.\psi(\vec{x}',\vec{y}'))$
  with $\vec{y}\subseteq \vec{y}'$, and $t(\vec{x},\vec{y})$ is the set of conjuncts of
  $\psi$ in which the variables $\vec{y}'-\vec{y}$ do not occur;
\item The canonical instance of $t(\vec{x},\vec{y})$ is a core;
\item The fact graph of the canonical instance of
  $t(\vec{x};\vec{y})$ is connected.
\end{enumerate}
If some f-block types generated by $\cM$ are
renamings of each other, we add only one of them to
$\textsc{Types}_\cM$.
\end{definition}

  The main result of this subsection is:

\begin{proposition}\label{prop:plausible}
  Let $J$ be a core universal solution of a source instance $I$ with
  respect to~$\cM$. Then each f-block of $J$ has type $t(\vec{x};\vec{y})$
  for some $t(\vec{x};\vec{y})\in\textsc{Types}_\cM$.
\end{proposition}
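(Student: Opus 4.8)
The plan is to compare the given core universal solution $J$ with the canonical universal solution $K$ of $I$ (the output of the naive chase of Figure~\ref{fig:naive-chase}), and to read off, for each f-block $B$ of $J$, an f-block type $t(\vec{x};\vec{y})\in\textsc{Types}_\cM$ with $B$ of type $t(\vec{x};\vec{y})$. Since $J$ and $K$ are both universal solutions of $I$, they are homomorphically equivalent, hence have isomorphic cores; as $J$ is itself a core, it is (up to isomorphism) the core of $K$, which is a retract of $K$. Because the conclusion is invariant under isomorphism of $J$, I may assume $J\subseteq K$ together with a retraction $r\colon K\to J$ satisfying $r(v)=v$ for all $v\in\dom(J)$.

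First I would describe the f-blocks of $K$. By construction of the naive chase, the facts of $K$ come in ``applications'': each non-decomposable \FOO s-t tgd $\forall\vec{x}'(\phi(\vec{x}')\to\exists\vec{y}'.\psi(\vec{x}',\vec{y}'))\in\Sigma_{st}$ and each tuple $\vec{a}$ with $I\models\phi(\vec{a})$ contribute the facts $\theta(\psi)$, where $\theta$ maps each $x'_i$ to $a_i$ and each $y'_j$ to a fresh null $N_j$, all of these nulls being distinct and no null shared between different applications. Hence each null of $K$ belongs to a unique application, so any connected set of facts of $K$ lies within a single $\theta(\psi)$. In particular, an f-block $B$ of $J$ is connected in the fact graph of $J$, hence in that of $K$, so $B\subseteq\theta(\psi)$ for one such application. (If $\psi$ is a single full conjunct, $B$ is a single constant fact and every claim below is trivial; the mixed case of a full conjunct together with other conjuncts is excluded by non-decomposability, a full conjunct sharing no existential variable with any other conjunct.)

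Next I would identify $B$ exactly. Let $\mathrm{Nulls}(B)$ be the set of nulls occurring in $B$, let $\vec{y}\subseteq\vec{y}'$ consist of those variables with $\theta(y'_j)\in\mathrm{Nulls}(B)$ (so $\theta$ restricts to a bijection $\vec{y}\to\mathrm{Nulls}(B)$), and let $t(\vec{x};\vec{y})$ be the set of conjuncts of $\psi$ in which no variable of $\vec{y}'\setminus\vec{y}$ occurs, with $\vec{x}$ the variables of $\vec{x}'$ occurring in them. Then $B=\theta(t)$: the inclusion $\subseteq$ is immediate, since a fact of $B$ is $\theta$ of a conjunct of $\psi$ whose existential variables map into $\mathrm{Nulls}(B)$ and hence lie in $\vec{y}$; for $\supseteq$, a conjunct $c\in t$ has $\theta(c)\in K$ with all of its values --- constants, and nulls from $\mathrm{Nulls}(B)\subseteq\dom(J)$ --- fixed by $r$, so $\theta(c)=r(\theta(c))\in J$, and $\theta(c)$ shares a null with $B$, so it lies in the f-block $B$. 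In particular $B$ has type $t(\vec{x};\vec{y})$ (witnessed by $\vec{x}\mapsto\vec{a}$, $\vec{y}\mapsto\theta(\vec{y})$, the latter injective), and condition (a) of Definition~\ref{def:plausible} holds for $t$ by construction.

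It remains to verify conditions (b) and (c) for $t$, and this is the main obstacle: because $\theta$ may identify distinct variables of $\vec{x}'$, the canonical instance $\widehat{B}$ of $t$ need not be isomorphic to $B$; only the surjection $\mu\colon\widehat{B}\to B$ sending the constant-variable $x_i$ to $a_i$ and the null-variable $y_j$ to $N_j$ is available, and $\mu$ is a bijection on nulls but possibly not on constants. I would first note that $B$, being an f-block of the core $J$, is itself a core: any proper retraction of $B$ extends by the identity to a proper retraction of $J$ (the nulls of $B$ occur only in $B$), contradicting coreness of $J$. For condition (b), given any endomorphism $g$ of $\widehat{B}$ --- which fixes every $x_i$ --- I would transfer it to an endomorphism $\bar g$ of $B$ by setting $\bar g(N_j)=\mu(g(y_j))$ and the identity on constants, check that $\bar g\circ\mu=\mu\circ g$, deduce from coreness of $B$ that $\bar g$ is an automorphism and hence permutes $\mathrm{Nulls}(B)$ and fixes its constants, and read back that $g$ permutes $y_1,\dots,y_m$ and fixes the $x_i$; then $g$ is a bijection of $\dom(\widehat{B})$, hence injective on tuples, hence surjective on facts, so $\widehat{B}$ has no non-surjective endomorphism and is a core. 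For condition (c), I would show $\widehat{B}$ is connected by considering the map $c\mapsto\theta(c)$ from the fact graph of $\widehat{B}$ onto that of $B$: it identifies two distinct conjuncts only when they differ solely in $\vec{x}'$-positions, hence agree on all $\vec{y}'$-occurrences and so lie in one connected component of $\widehat{B}$; a short argument then shows that a splitting of $\widehat{B}$ into two components would induce a splitting of the connected instance $B$, a contradiction. Finally, since $\textsc{Types}_\cM$ contains one renaming of each generated f-block type and types are identified up to renaming, $B$ has type $t(\vec{x};\vec{y})$ for a member of $\textsc{Types}_\cM$. I expect condition (b) to be the crux, precisely because of the possible collapsing of constant variables under $\theta$, which forces this indirect argument via coreness of $B$ rather than a direct isomorphism $\widehat{B}\cong B$. \qed
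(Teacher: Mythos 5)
Your proof is correct and follows essentially the same route as the paper's: locate each f-block of the core inside the facts produced by a single naive-chase step (using that $J$ is, up to isomorphism, a retract/induced subinstance of the canonical universal solution and that the s-t tgds are non-decomposable), take as the type the conjuncts of $\psi$ whose existential variables are retained, and derive conditions (b) and (c) of Definition~\ref{def:plausible} from the coreness and connectivity of $J$. Your careful handling of the possible identification of constant variables under $\theta$ merely fills in details that the paper's proof leaves implicit.
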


\begin{proof} Let $B$ be any f-block of $J$. 
  Since $J$ is a core universal solution, it is, up to
  isomorphism, an induced subinstance of the canonical universal
  solution $J'$ of $I$. It follows that $J'$ must have an f-block $B'$
  such that $B$ is the restriction of $B'$ to domain of $J$. Since
  $B'$ is a connect component of the fact graph of $J'$, it must
  have been created in a single step during the naive chase. In other 
  words,
  there is an \FOO s-t tgd
  $$\forall\vec{x}(\phi(\vec{x})\to\exists\vec{y}.\psi(\vec{x},\vec{y}))$$
  and an assignment $g$ of constants to the variables $\vec{x}$ and
  distinct nulls to the variables $\vec{y}$ such that $B'$ is
  contained in the set of conjuncts of
  $\psi(g(\vec{x}),g(\vec{y}))$. Moreover, since we assume the 
  \FOO s-t tgds of $\cM$ to be non-decomposable and $B'$ is a 
  a connected component of the fact graph of $J$, $B'$ must be
  exactly the set of facts listed in  $\psi(g(\vec{x}),g(\vec{y}))$.
  In other words, if we let $t(\vec{x};\vec{y})$ be the set of all 
    facts listed in $\psi$, then $B'$ has type $t(\vec{x};\vec{y})$. 
  Finally, let $t'(\vec{x}';\vec{y}')\subseteq t(\vec{x};\vec{y})$ be the 
  set of all facts from $t(\vec{x};\vec{y})$ containing only
  variables $y_i$ for which $g(y_i)$ occurs in $B$. Since $B$ is 
  the restriction of $B'$ to the 
      domain of $J$, we have that $B$ is of type $t'(\vec{x}',\vec{y}')$. 
  Moreover, the fact graph of the canonical instance of $J$ is connected because $B$ is 
  connected, and the canonical instance of $t'(\vec{x}';\vec{y}')$
  is a core, because, if it would not be, then $B$ would not be a
  core either, and hence $J$ would not be a core either, which
  would lead to a contradiction. It follows that $t'(\vec{x}';\vec{y}')\in \textsc{Types}_\cM$.
  \qed
\end{proof}

Note that $\textsc{Types}_\cM$ contains only finitely many f-block
types. Still, the number is in general exponential in the size of the
schema mapping, as the following example shows.

\begin{example}\label{ex:exponentially-many-patterns}
  Consider the schema mapping specified by the following s-t tgds:
  \[\begin{array}{l@{~}ll}
     P_i x &\to P'_i x & \text{(for each $1\leq i\leq k$)} \\
     Q x   &\to \exists y_0y_1\ldots y_k(Rxy_0\land\bigwedge_{1\leq i\leq k}(Ry_i y_0\land P'_i y_i)) \\
   \end{array}\]
   For each $S\subseteq \{1, \ldots, k\}$, the f-block type
   $$t_S(x;(y_i)_{i\in S\cup\{0\}})=\{Rxy_0\}\cup\{Ry_i y_0, P'_i y_i
   \mid i\in S\}$$ belongs to $\textsc{Types}_\cM$. 
   Indeed, each of these
   $2^k$ f-block types is realized in the core
   universal solution of some source instance. 
  The example can be modified to use a fixed
  schemas: replace $P'_i x$ by $Sxx_1\land Sx_1x_2 \land \ldots
  Sx_{i-1}x_i\land Sx_ix_i$. \mbox{}\hfill$\dashv$
\end{example}

The same example %
can be used
 to show
that the smallest logically equivalent schema mapping that is laconic can be exponentially longer. 

\subsection{Computing the precondition of an f-block type}

Recall that, to simplify notation, we assume a fixed schema mapping
$\cM$ specified by \FOO s-t tgds.  The main result of this subsection is
the following, which shows that whether an f-block type is
realized in the core universal solution at a given sequence of
constants $\vec{a}$ is something that can be tested by a first-order
query on the source.

\begin{proposition}\label{prop:precon}
  For each $t(\vec{x};\vec{y})\in\textsc{Types}_\cM$ there is a \FOO
  query $precon_{t}(\vec{x})$ such that for every source instance $I$
  with core universal solution $J$, and for every tuple of constants
  $\vec{a}$, the following are equivalent:
\begin{enumerate}
\item $\vec{a}\in precon_{t}(I)$
\item $t(\vec{x};\vec{y})$ is realized in $J$ at $\vec{a}$.
\end{enumerate}
\end{proposition}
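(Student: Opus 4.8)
The plan is to build $precon_t(\vec x)$ from the certain-answer rewriting machinery (Theorem~\ref{thm:certain}) together with a careful analysis of exactly which homomorphic images of $t$ survive into the core. The starting observation is that the core universal solution $J$ is, up to isomorphism, the induced subinstance of the canonical universal solution $K$ on those values that lie in the image of some retraction $K\to J$; and by Proposition~\ref{prop:plausible} every f-block of $J$ has a type in $\textsc{Types}_\cM$. So saying ``$t(\vec x;\vec y)$ is realized in $J$ at $\vec a$'' is equivalent to saying: (i) the facts $t(\vec a,\vec N)$ for fresh distinct nulls $\vec N$ are all present in $K$ (equivalently, are forced by some s-t tgd firing — this is an \FO condition on $I$, essentially $\bigvee$ over the tgds and the ways $t$ embeds into a right-hand side, of the corresponding antecedent with the constant variables identified appropriately); and (ii) none of the nulls $\vec N$ gets ``collapsed'' by the core retraction, i.e. the f-block generated is not properly foldable into the rest of $K$.

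First I would handle (i): define a preliminary \FO formula $\mathit{gen}_t(\vec x)$ that is a disjunction, over all \FOO s-t tgds $\forall\vec x'(\phi\to\exists\vec y'.\psi)$ in $\Sigma_{st}$ and all ways of matching $t$'s atoms into $\psi$'s atoms with $\vec y$ mapped injectively into $\vec y'$, of the formula obtained from $\phi$ by substituting the $\vec x$-variables for the matched $\vec x'$-variables (and existentially quantifying the rest). This says exactly that $K$ contains a block of type $\supseteq t$ at $\vec a$. Then the real work is step (ii): among all such generated blocks, which ones survive in the core? A block $B = t(\vec a,\vec N)$ survives iff there is no homomorphism from $B$ into $K\setminus\{\text{nulls of }B\}$ fixing the constants — equivalently, iff $B$ does not fold into the rest of $K$. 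Crucially, because the tgds are non-decomposable and nulls of distinct firings are disjoint, whether $B$ folds away depends only on which facts over $\dom(B)\cup\Cons$ are present in $K$ near the constants $\vec a$ and near the images of the possible ``folding targets''. I would express ``$B$ does not fold'' as a conjunction, over all candidate homomorphisms $h$ from the canonical instance of $t$ into a *possible* block configuration, of the negation of ``that configuration is present'': each such ``configuration present'' statement is again of the form ``$K$ contains certain target facts at certain constants'', which is \FO-expressible on $I$ by another application of the generation idea and, ultimately, of Theorem~\ref{thm:certain} (certain answers of the conjunctive query describing the target configuration). Putting these together, $precon_t(\vec x) := \mathit{gen}_t(\vec x) \land (\text{the no-folding conjunction})$, which is an \FOO formula (it uses $<$ only inasmuch as the rewriting of Theorem~\ref{thm:certain} might, and in fact it need not here).

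The main obstacle — and where the argument needs the most care — is correctly characterizing ``$B$ does not fold into the rest of $K$'' purely in terms of \FO-definable facts about $I$, because a priori the folding target could involve arbitrarily many other blocks and arbitrarily deep null terms. The key lemma to isolate is a locality/boundedness statement: since f-block sizes in core universal solutions are bounded (the result from \cite{Fagin08:towards} already invoked), any folding homomorphism $h:B\to K$ has image contained in a bounded number of blocks of $K$, hence is ``witnessed'' by a bounded conjunctive pattern of target facts over $\dom(B)$ together with boundedly many constants; there are only finitely many such patterns up to the choice of those constants, and each pattern's presence in $K$ is a conjunctive target query whose certain-answer rewriting over $I$ is \FO by Theorem~\ref{thm:certain}. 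I expect the bulk of the proof to consist of making this folding-witness bound precise and verifying that ranging over the finitely many patterns and taking the appropriate Boolean combination yields a formula that is correct on *every* $I$ — in particular, that it correctly distinguishes the case where $B$ folds only because two blocks at $\vec a$ coincide (handled by the non-decomposability and the injectivity requirement on $\vec y$) from genuine folding. Once the no-folding formula is shown correct, the equivalence of (1) and (2) is immediate from the characterization of the core as the non-collapsing part of $K$.
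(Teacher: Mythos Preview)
Your proposed equivalence between ``$t$ is realized in $J$ at $\vec a$'' and conditions (i)+(ii) on the canonical universal solution $K$ fails in both directions, and this is the substantive gap rather than the boundedness step you flag. For the forward direction, take $\Sigma_{st}=\{Px\to\exists y.Rxy,\ Qx\to\exists y.Rxy\}$ and $I=\{Pa,Qa\}$. The canonical solution $K$ consists of two singleton blocks $\{RaN_1\}$ and $\{RaN_2\}$; each folds into the other, so your condition (ii) fails for every candidate block, and your formula declares $t(x;y)=\{Rxy\}$ unrealized at $a$. But the core $J=\{RaN\}$ does realize $t$ at $a$. Your remark that duplicate folding is ``handled by non-decomposability and the injectivity requirement on $\vec y$'' does not apply: non-decomposability constrains a single tgd's right-hand side and says nothing about two firings (possibly of different tgds) producing isomorphic blocks. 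For the converse direction, take $\Sigma_{st}=\{Px\to\exists y.Rxy,\ Qx\to\exists y.(Rxy\land Sy)\}$ and $I=\{Qa\}$. Here $K=J=\{RaN,SN\}$; your conditions (i) and (ii) both hold for $t(x;y)=\{Rxy\}$ at $a$ (the null $N$ survives uncollapsed), yet $t$ is \emph{not} realized at $a$, because the unique f-block of $J$ has the strictly larger type $\{Rxy,Sy\}$. Nothing in your plan excludes this.

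The paper avoids both problems by never singling out a block of $K$ at all. It works directly with certain answers, which are evaluated in $J$: the auxiliary formula $precon'_t(\vec x)$ asserts that $\vec a$ is a certain answer to $\exists\vec y.\bigwedge t$, but not to any variant obtained by identifying two $y_i,y_j$ or replacing some $y_i$ by a fresh constant variable. Since $J$ is a core, this forces the witnesses in $J$ to be distinct nulls, and the formulation is automatically insensitive to how many redundant copies $K$ may contain. The ``strictly larger block'' phenomenon is then eliminated in a second pass: for every strict embedding $h\colon t\hookrightarrow t'$ with $t'\in\textsc{Types}_\cM$ one conjoins $\neg\exists\vec x'\bigl(\bigwedge_i x_i{=}h(x_i)\land precon'_{t'}(\vec x')\bigr)$. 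Because $\textsc{Types}_\cM$ is finite this is a finite conjunction, and no locality or boundedness argument about folding targets is required.
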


\begin{proof}
  We first define an intermediate formula $precon'_{t}(\vec{x})$ that
  almost satisfies the required properties, but not quite yet.  For
  each f-block type $t(\vec{x};\vec{y})$, let
  $precon'_{t}(\vec{x})$ be the following formula:

  \[\begin{split} certain_{\cM}(\exists \vec{y}.\bigwedge t)(\vec{x}) ~~\land~~
     \bigwedge_{i\neq j} \neg certain_{\cM}(\exists
     \vec{y}_{-i}.\bigwedge t[y_i/y_j])(\vec{x}) \qquad\qquad\qquad \\ ~~\land~~
     \bigwedge_{i} \neg \exists x'.certain_{\cM}(\exists
     \vec{y}_{-i}.\bigwedge t[y_i/x'])(\vec{x},x')
  \end{split}
  \]
  where $\vec{y}_{-i}$ stands for the sequence $\vec{y}$ with $y_i$
  removed, and $t[u/v]$ is the result of replacing each occurrence of
  $u$ by $v$ in $t$. By construction, if $precon_t(\vec{a})$ holds in
  $I$, then every universal solution $J$ satisfies
  $t(\vec{a},\vec{N})$ for some some sequence of distinct nulls
  $\vec{N}$. Still, it may not be the case that $t(\vec{x};\vec{y})$
  is realized at $\vec{a}$, since it may be that that
  $t(\vec{a},\vec{N})$ is part of a bigger f-block. To make things
  more precise, we introduce the notion of an embedding.
For any two f-block types,
$t(\vec{x};\vec{y})$ and $t'(\vec{x}';\vec{y}')$, an \emph{embedding}
of the first into the second is a function $h$ mapping $\vec{x}$ into
$\vec{x}'$ and mapping $\vec{y}$ injectively into $\vec{y}'$, such
that whenever $t$ contains an atomic formula $R(\vec{z})$, then
$R(h(\vec{z}))$ belongs to of $t'$. The embedding $h$ is \emph{strict}
if $t'$ contains an atomic formula that is not of the form
$R(h(\vec{z}))$ for any $R(\vec{z})\in t$.  Intuitively, the existence
of a strict embedding means that $t'$
describes an f-block that properly contains the f-block described by
$t$.

Let $I$ be any source instance, $J$ any core universal solution,
$t(\vec{x};\vec{y})\in \textsc{Types}_\cM$, and
$\vec{a}$ a sequence of constants. 

\begin{trivlist}
\item \emph{Claim 1:} If $t$ is realized in $J$ at $\vec{a}$, then
  $\vec{a}\in precon'_{t}(I)$.

  \medskip\item \emph{Proof of claim:} Clearly, since $t$ is realized
  in $J$ at $\vec{a}$ and $J$ is a universal solution, the first
  conjunct of $precon'_{t}$ is satisfied. That the rest of the query
  is  satisfied is also easily seen: otherwise $J$ would not be a core.
  \hfill \emph{End of proof of claim.}
\end{trivlist}

\begin{trivlist}
\item \emph{Claim 2:} If $\vec{a}\in precon'_{t}(I)$, then either $t$
  is realized in $J$ at $\vec{a}$ or some f-block type
  $t'(\vec{x}';\vec{y}')\in\textsc{Types}_\cM$ is realized at a tuple of
  constants $\vec{a}'$, and there is a strict embedding
  $h:t\to t'$ such that $a_i=a'_j$ whenever $h(x_i)=x'_j$.

  \medskip\item \emph{Proof of claim:} It follows from the
  construction of $precon'_t$, and the definition of $\textsc{Types}_\cM$
  types, that the witnessing assignment for its truth must send all
  existential variables to distinct nulls, which belong to the same
  block. By Proposition~\ref{prop:plausible}, the diagram of this
  block is a specialization of an f-block type $t'\in\textsc{Types}_\cM$. It
  follows that $t$ is embedded in $t'$ and $\vec{a}$, together with
  possible some additional values in $\Cons$, realize $t'$.
  \hfill\emph{End of proof of claim.}
\end{trivlist}

We now define $precon_{t}(\vec{x})$ to be the following formula:

  \[\begin{split}&precon'_{t}(\vec{x}) ~~\land~~  %
  \hspace{-10mm}
  \mathop{\bigwedge_{\text{$t'(\vec{x}';\vec{y}')\in\textsc{Types}_\cM$}}}_\text{$h:t(\vec{x};\vec{y})\to t'(\vec{x}';\vec{y}')$ a strict embedding}
\hspace{-15mm}\neg\exists\vec{x}'.\Big(
\bigwedge_i (x_i=h(x_i)) ~~\land~~ 
precon'_{p'}(\vec{x}') \Big)
  \end{split}\]

  This formula satisfies the required conditions: $\vec{a}\in
  precon_{t}(I)$ iff $t(\vec{x};\vec{y})$ is realized in $J$ at
  $\vec{a}$. The left-to-right direction follows from Claim 2, while
  the right-to-left direction follows from Claim 1 together with the
  fact that $J$ is a core.  \qed\end{proof}

\subsection{Computing the side conditions of an f-block type}
\label{sec:side-conditions}

The issue we address in this subsection, namely that of non-rigid
f-block types, is best explained by an example.

\begin{example}
  Consider again schema mapping (e) in Figure~\ref{fig:examples}. This
  schema mapping is not laconic, because, when a source instance
  contains $Rab$ and $Rba$, for distinct values $a,b$, the canonical
  universal solutions will contain two null values $N$, each satisfying
  $SaN$ and $SbN$, corresponding to the two assignments $\{x\mapsto a,
  y\mapsto b\}$ and $\{x\mapsto b, y\mapsto a\}$. The essence of the
  problem is in the fact that the right-hand-side of the
  dependency is, in some sense, symmetric: it is a non-trivial
  renaming of itself, the renaming in question being $\{x\mapsto y, y\mapsto
  x\}$. According to the terminology that we will introduce below,
  the right-hand-side of this dependency is non-rigid. Schema mapping
  (e$'$) from Figure~\ref{fig:examples} does not suffer from this
  problem, because it contains $x\leq y$ in the antecedent, and we are
  assuming $<$ to be a linear order on the values in the source
  instance.  \hfill$\dashv$
\end{example}

In order to formalize the intuition exhibited in the above example, we
need to introduce some terminology. We say that two f-blocks, $B, B'$,
are \emph{copies of each other}, if there is a bijection 
$f$ from $\Cons$ to $\Cons$ and from $\Vars$ to $\Vars$
such that $f(a)=a$ for all $a\in\Cons$ and
$B'=\{R(f(v_1),\ldots,f(v_k))\mid R(v_1,\ldots,v_k)\in B\}$. In other
words, $B'$ can be obtained from $B$ by renaming null values. 

\begin{definition}
  An f-block type $t(\vec{x};\vec{y})$ is rigid if for any two
  sequences of constants $\vec{a},\vec{a}'$ and for any two sequences
  of distinct nulls $\vec{N}, \vec{N}'$, if $t(\vec{a};\vec{N})$ and
  $t(\vec{a}';\vec{N}')$ are copies of each other, then
  $\vec{a}=\vec{a}'$.
\end{definition}

The s-t tgd from the above example is easily seen to be non-rigid. 
Moreover, a simple variation of the argument in the above example shows:

\begin{proposition}\label{prop:rigidity-char}
  If an f-block type $t(\vec{x};\vec{y})$ is non-rigid, then the
  schema mapping specified by the \FO (in fact LAV) s-t tgd
  $\forall\vec{x}(R(\vec{x})\to \exists \vec{y}.\bigwedge
  t(\vec{x};\vec{y}))$ is not laconic.
\end{proposition}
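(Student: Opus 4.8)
The plan is to show the contrapositive's companion directly: assuming $t(\vec{x};\vec{y})$ is non-rigid, I exhibit a single source instance $I$ whose canonical universal solution with respect to $\cM_t := (\bS,\bT,\{\forall\vec{x}(R(\vec{x})\to\exists\vec{y}.\bigwedge t(\vec{x};\vec{y}))\})$ is not a core. First I would unpack non-rigidity: there exist constant tuples $\vec{a}\neq\vec{a}'$ and distinct-null tuples $\vec{N},\vec{N}'$ such that $t(\vec{a};\vec{N})$ and $t(\vec{a}';\vec{N}')$ are copies of each other, i.e.\ related by a null-renaming bijection $f$ fixing $\Cons$. The natural choice of source instance is the one with $\dom(I)$ the set of constants occurring among $\vec{a}$ and $\vec{a}'$ and with $R^I$ containing exactly the two tuples $\vec{a}$ and $\vec{a}'$ (and, as the dependency is a LAV s-t tgd, nothing else is forced).

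Next I would run the naive chase of Figure~\ref{fig:naive-chase} on $I$. Because $R^I = \{\vec{a},\vec{a}'\}$ and there is one dependency, the chase fires exactly twice, producing an f-block $B = t(\vec{a};\vec{N}_1)$ for a fresh distinct-null tuple $\vec{N}_1$ and an f-block $B' = t(\vec{a}';\vec{N}_2)$ for another fresh distinct-null tuple $\vec{N}_2$, with $\vec{N}_1,\vec{N}_2$ disjoint. So the canonical universal solution $J$ is (isomorphic to) $B\cup B'$ with $B,B'$ sharing no nulls. I then argue $J$ is not a core by building a non-injective retraction. Since $t(\vec{a};\vec{N})$ and $t(\vec{a}';\vec{N}')$ are copies via a null-renaming $f$ fixing constants, composing with the obvious renamings $\vec{N}\mapsto\vec{N}_2$ on the one side and $\vec{N}\mapsto\vec{N}_1$ on the other yields a bijection $g$ on nulls, fixing all constants, with $g(B') = B$ (as fact sets). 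Extend $g$ to a homomorphism $h:J\to J$ by letting $h$ act as $g$ on the nulls of $B'$ and as the identity on the nulls of $B$ and on all constants; this is well-defined because $B$ and $B'$ share no nulls, and it is a homomorphism because it sends every fact of $B'$ to a fact of $B\subseteq J$ and every fact of $B$ to itself. Its image is contained in $B$, which is a \emph{proper} subinstance of $J$: the nulls $\vec{N}_2$ do not occur in the image, and they do occur in $J$, precisely because $\vec{a}\neq\vec{a}'$ forces $B\neq B'$ (were $B = B'$ as fact sets, then $t(\vec{a};\vec{N}_1) = t(\vec{a}';\vec{N}_2)$ with disjoint null tuples would force $\vec{N}_1 = \vec{N}_2$, impossible). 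Hence $J$ has a proper retract, so $J$ is not a core, and $\cM_t$ is not laconic.

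The one point that needs care — and the place I'd expect the ``simple variation of the argument'' phrase to be doing work — is the bookkeeping that the copy relation between the two \emph{abstract} instantiations $t(\vec{a};\vec{N})$, $t(\vec{a}';\vec{N}')$ transfers to the two \emph{concrete} f-blocks $B,B'$ produced by the chase, and in the right direction: I need $h$ to collapse $B'$ onto $B$ (or vice versa), which requires the renaming $f$ to match up the null variables of $t$ consistently with how the chase assigned fresh nulls. Since f-block types are sets of atomic formulas in variables $\vec{x},\vec{y}$ and the chase instantiates $\vec{y}$ by any injective choice of fresh nulls, this is just a matter of precomposing $f$ with the two instantiation bijections, but one should state explicitly that a copy of $t(\vec{a};\vec{N})$ in the sense of the ``copies of each other'' definition induces, for \emph{any} two disjoint fresh null instantiations, a constant-fixing null bijection carrying one concrete f-block onto the other. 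Everything else — existence of the canonical universal solution, its coincidence with the chase output, and the non-core witness — is routine given the definitions and Proposition~\ref{prop:canon-FO}.
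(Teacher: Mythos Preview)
Your proposal is correct and follows essentially the same approach that the paper indicates: the paper does not give a standalone proof but says ``a simple variation of the argument in the above example shows'' the result, referring to the example where the source instance contains both $R(a,b)$ and $R(b,a)$ and the chase produces two redundant f-blocks. Your construction---taking $R^I=\{\vec{a},\vec{a}'\}$ for the non-rigidity witnesses, observing that the naive chase produces two null-disjoint blocks that are copies of each other, and collapsing one onto the other via a constant-fixing homomorphism---is exactly that variation spelled out in full, including the bookkeeping about transporting the abstract copy bijection to the concrete chase-generated blocks.
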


In other words, if an f-block type is non-rigid,
one cannot simply use it as the right-hand-side of an s-t tgd without
running the risk of non-laconicity. 
Fortunately, it turns out that f-block types can be made rigid by the
addition of suitable side conditions. By a side condition
$\Phi(\vec{x})$ we will mean a Boolean combination of formulas of the
form $x_i< x_j$ or $x_i=x_j$.

\begin{definition}
  An f-block type $t(\vec{x};\vec{y})$ is rigid relative to a side
  condition $\Phi(\vec{x})$ if 
  for any two
  sequences of constants $\vec{a},\vec{a}'$ satisfying $\Phi(\vec{a})$ and $\Phi(\vec{a}')$
  and for any two sequences
  of distinct nulls $\vec{N}, \vec{N}'$, if $t(\vec{a};\vec{N})$ and
  $t(\vec{a}';\vec{N}')$ are copies of each other, then
  $\vec{a}=\vec{a}'$.
\end{definition}

\begin{definition}
  A side-condition $\Phi(\vec{x})$ is safe for an f-block type $t(\vec{x};\vec{y})$ if for every f-block
  $t(\vec{a},\vec{N})$ of type $t$ there is a f-block
  $t(\vec{a}',\vec{N}')$ of type $t$ satisfying $\Phi(\vec{a}')$ such
  that the two are copies of each other.
\end{definition}

Intuitively, safety means that the side condition is not too strong:
whenever a f-block type should be realized in a core universal
solution, there will be at least one way of arranging the variables
so that the side condition is satisfied. 
The main result of this subsection, which will be put to use in 
the next subsection, is the following:

\begin{proposition}\label{prop:rigidity}
  For every f-block type $t(\vec{x};\vec{y})$ there is a side
  condition $sidecon_t(\vec{x})$ such that
   $t(\vec{x};\vec{y})$ is rigid relative to $sidecon_t(\vec{x})$, and
   $sidecon_t(\vec{x})$ is safe for $t(\vec{x};\vec{y})$.
\end{proposition}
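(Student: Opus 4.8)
The plan is to construct $sidecon_t(\vec{x})$ explicitly from the automorphism group of the f-block type $t$, using the linear order to pick out a canonical representative of each orbit. First I would analyze the symmetries of $t(\vec{x};\vec{y})$: say that a permutation $\sigma$ of the constant variables $\vec{x}$ is a \emph{symmetry} of $t$ if there is an accompanying bijection $\tau$ on the null variables $\vec{y}$ such that applying $(\sigma,\tau)$ to the atomic formulas of $t$ yields exactly $t$ again. The set of such $\sigma$ forms a subgroup $G$ of the symmetric group on $\vec{x}$. The key observation connecting this to copies is: two f-blocks $t(\vec{a};\vec{N})$ and $t(\vec{a}';\vec{N}')$ are copies of each other (via a renaming fixing constants) if and only if $\vec{a}' = \sigma(\vec{a})$ for some $\sigma\in G$ — here one uses that the null variables must be in injective position, so a copy-renaming is forced to match up the constant positions according to an honest symmetry of the atom-set. (A small subtlety: one must take into account that the substitution realizing $t$ need not be injective on constant variables, so two constant variables $x_i,x_j$ may receive the same value $a_i=a_j$; I would handle this by working modulo the equivalence relation that the side condition itself will pin down, or equivalently by first guessing an equality type on $\vec{x}$ and treating each equality type separately — each equality type contributes a disjunct of $sidecon_t$.)

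Next, for a fixed equality type, I would define $sidecon_t(\vec{x})$ to assert that $\vec{x}$ is the lexicographically least tuple in its $G$-orbit with respect to $<$: concretely, $sidecon_t(\vec{x}) := \bigwedge_{\sigma\in G}\big(\vec{x} \leq_{\mathrm{lex}} \sigma(\vec{x})\big)$, where $\vec{x}\leq_{\mathrm{lex}}\vec{x}'$ is the obvious Boolean combination of $<$ and $=$ formulas expressing lexicographic order. Since $G$ is finite (it is a subgroup of a finite symmetric group, even if computing it is expensive), this is a genuine side condition in the sense of the definition — a finite Boolean combination of atoms $x_i<x_j$ and $x_i=x_j$. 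For rigidity: if $t(\vec{a};\vec{N})$ and $t(\vec{a}';\vec{N}')$ are copies of each other and both $\vec{a},\vec{a}'$ satisfy $sidecon_t$, then $\vec{a}' = \sigma(\vec{a})$ for some $\sigma\in G$ by the observation above; but then $\vec{a}$ is $\leq_{\mathrm{lex}}$-minimal in its orbit and so is $\vec{a}'$, and since $<$ is a linear order these minima coincide, giving $\vec{a}=\vec{a}'$. For safety: given any f-block $t(\vec{a},\vec{N})$, let $\vec{a}'$ be the $\leq_{\mathrm{lex}}$-least element of the $G$-orbit of $\vec{a}$ (which exists and is unique since the orbit is finite and $<$ is linear); then $\vec{a}'$ satisfies $sidecon_t$, and applying the corresponding symmetry $(\sigma,\tau)$ with $\sigma(\vec{a})=\vec{a}'$ to $t(\vec{a},\vec{N})$ yields an f-block $t(\vec{a}',\vec{N}')$ of type $t$ (with $\vec{N}' = \tau(\vec{N})$ still a tuple of distinct nulls) that is a copy of the original.

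The main obstacle I anticipate is the bookkeeping around non-injectivity on the constant variables and the interaction with the order atoms: when $a_i=a_j$ for some $i\neq j$, the "orbit" picture is slightly more delicate because the symmetry group acting on genuinely distinct values is a quotient/subgroup situation rather than the full $G$. The clean way around this is the disjunction-over-equality-types device mentioned above: for each equality type $\theta$ on $\vec{x}$ (an equivalence relation on the index set), let $G_\theta$ be the subgroup of $G$ that respects $\theta$, form the lexicographic-minimality condition using $G_\theta$ acting on representatives of the $\theta$-classes, conjoin it with the atomic formulas forcing exactly the equalities in $\theta$, and let $sidecon_t$ be the disjunction over all $\theta$. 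Verifying that this disjunction is still rigid (different disjuncts are mutually exclusive because they force incompatible equality types, so no two tuples from different disjuncts can be copies) and still safe (every f-block has \emph{some} equality type, and within that type safety holds as above) is then routine. Everything else — that the resulting formula is a legitimate side condition, and that $G$ and each $G_\theta$ are finite — is immediate.
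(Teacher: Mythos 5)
Your overall strategy---pick a canonical representative of each orbit under the symmetries of $t$ using the linear order---is a sensible alternative to the paper's argument, and it does work when the assignment to the constant variables is injective. But the non-injective case, which you flag as the main obstacle, is exactly where your argument breaks, and the equality-type device does not repair it. Your key lemma, that $t(\vec{a};\vec{N})$ and $t(\vec{a}';\vec{N}')$ are copies of each other iff $\vec{a}'=\sigma(\vec{a})$ for some symmetry $\sigma$ of $t$ (or, within an equality type $\theta$, for some $\sigma\in G_\theta$), is false. Take $t(x_1,x_2,x_3;y)=\{Rx_1x_2y,\ Rx_2x_3y\}$: its symmetry group $G$ is trivial, yet for $a\neq b$ the blocks $t(a,b,a;N)=\{RabN,\ RbaN\}$ and $t(b,a,b;N')=\{RbaN',\ RabN'\}$ are copies of each other. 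Both tuples have the same equality type ($x_1=x_3$, $x_1\neq x_2$), hence satisfy the same disjunct of your $sidecon_t$, and since $G_\theta\subseteq G$ is trivial that disjunct imposes no order constraint at all---so rigidity fails for your condition. The underlying problem is that a fact-level copy-renaming need not lift to a permutation of the variables of $t$ when constant variables collapse: the relevant symmetries are automorphisms of the \emph{quotient} type $t/\theta$ (identifying $x_1$ with $x_3$ above creates a genuine swap symmetry), not the $\theta$-respecting elements of $G$. Your mutual-exclusivity argument for the disjunction also fails: copies can relate tuples of \emph{different} equality types, e.g., for $t(x_1,x_2,x_3;y)=\{Rx_1x_2y,\ Rx_2x_3y,\ Rx_3x_1y\}$ the blocks at $(a,a,b)$ and $(a,b,a)$ are copies, because the rotation symmetry moves the equality pattern.

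The construction can in principle be repaired: for each equality type $\theta$ work with the automorphism group of $t/\theta$, and across equality types account for isomorphisms between different quotients (choose a canonical $\theta$ in each isomorphism class, and within it a lexicographically least orbit representative). That is more bookkeeping than your proposal contains, and the copy-characterization lemma then has to be proved at the level of quotients. The paper avoids this symmetry analysis altogether: it builds the side condition iteratively, starting from $\top$; whenever rigidity fails for a pair of copies at $\vec{a}\neq\vec{a}'$, it conjoins the negation of the complete order-equality diagram of $\vec{a}$, shows safety is preserved by composing the two copy maps, and terminates because there are only exponentially many such diagrams in $|\vec{x}|$. If you want to keep an explicit, one-shot group-theoretic definition of $sidecon_t$, you must first establish the corrected characterization of copies via quotient types; otherwise the paper's greedy elimination is the safer route.
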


\begin{proof}
  We will construct a sequence of side conditions
   $\Phi_0(\vec{x}), \ldots, \Phi_n(\vec{x})$ 
   safe for $t(\vec{x};\vec{y})$, such that each
   $\Phi_{i+1}$ logically strictly implies $\Phi_{i}$,
   and such that $t(\vec{x};\vec{y})$ is rigid relative to $\Phi_n(\vec{x})$.
   Note that $n$ is necessarily bounded by a single 
       exponential function in $|\vec{x}|$.
  For $\Phi_0(\vec{x})$ we pick the tautology $\top$, which is trivially safe for $t(\vec{x};\vec{y})$.

  Suppose that $t(\vec{x};\vec{y})$ is not rigid relative to $\Phi_i(\vec{x})$, for some $i\geq 0$.
  By definition, this means that there are  
  two sequences of constants $\vec{a},\vec{a}'$ satisfying $\Phi_i(\vec{a})$ and $\Phi_i(\vec{a}')$
  and two sequences
  of distinct nulls $\vec{N}, \vec{N}'$, such that $t(\vec{a};\vec{N})$ and
  $t(\vec{a}';\vec{N}')$ are copies of each other, but
  $\vec{a}$ and $\vec{a}'$ are not the same sequence, i.e., 
  they differ in some coordinate. 
  Let $\psi(\vec{x})$ be the conjunction of all formulas of the form 
   $x_i< x_j$ or $x_i=x_j$ that are true under the assignment sending
     $\vec{x}$ to $\vec{a}$, and let $\Phi_{i+1}(\vec{x}) = \Phi_i(\vec{x})\land\neg\psi(\vec{x})$.
   It is clear that $\Phi_{i+1}$ is strictly stronger than $\Phi_i$. Moreover,
   we $\Phi_{i+1}$ is still safe for $t(\vec{x};\vec{y})$:
    consider any f-block $t(\vec{b},\vec{M})$ of type $t(\vec{x};\vec{y})$. 
    Since $\Phi_i$ is safe for $t$, we can find a f-block $t(\vec{b}',\vec{M}')$
    of type $t$ such that $\Phi_i(\vec{b'})$ the two blocks are copies of each other. 
    If $\neg\psi(\vec{b}')$ holds, then in fact $\Phi_{i+1}(\vec{b}')$ holds, and
    we are done. Otherwise,  we have that $t(\vec{b}',\vec{M}')$ is isomorphic to
    $t(\vec{a},\vec{N})$ and the preimage of $t(\vec{a}',\vec{N}')$  under this
     isomorphism will be again a copy of $t(\vec{b}',\vec{M}')$ (and therefore
       also of $t(\vec{b},\vec{M})$) that satisfies 
       $\Phi_i(\vec{b}')\land\neg\psi(\vec{b}')$, i.e., $\Phi_{i+1}(\vec{b}')$.
      \qed
\end{proof}

Incidentally, we believe the above construction of side-conditions is
not the most efficient possible, in terms of the size of the
side-condition obtained. It can probably be improved.

\subsection{Putting things together: constructing the laconic schema mapping}
\label{sec:putting}

\begin{theorem}
  For each schema mapping $\cM$ specified by \FOO s-t tgds, there is
  laconic schema mapping $\cM'$ specified by \FOO s-t tgds that is
  logically equivalent to $\cM$.
\end{theorem}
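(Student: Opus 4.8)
The plan is to let $\cM'$ be the schema mapping produced by the four-step recipe outlined at the start of this section. After first decomposing the tgds of $\cM$ so that the standing non-decomposability assumption applies — this is harmless and preserves logical equivalence — I set $\cM'=(\bS,\bT,\Sigma'_{st})$, where $\Sigma'_{st}$ contains, for each f-block type $t(\vec{x};\vec{y})\in\textsc{Types}_\cM$, the dependency
\[
  \forall\vec{x}\bigl(\,precon_t(\vec{x})\wedge sidecon_t(\vec{x})\ \to\ \exists\vec{y}.\textstyle\bigwedge t(\vec{x};\vec{y})\,\bigr),
\]
with $precon_t$ the precondition from Proposition~\ref{prop:precon} and $sidecon_t$ the side condition from Proposition~\ref{prop:rigidity}. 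This is a bona fide \FOO s-t tgd: its antecedent is the conjunction of an \FOO query over $\bS$ with a Boolean combination of atoms of the form $x_i<x_j$ and $x_i=x_j$, and its consequent is a conjunction of $\bT$-atoms.

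The core of the argument is the single claim that, for every source instance $I$, the canonical universal solution of $I$ with respect to $\cM'$ is isomorphic to the core universal solution $J$ of $I$ with respect to $\cM$. Granting this, the theorem follows immediately. Laconicity is clear, since the canonical universal solution of $I$ w.r.t.\ $\cM'$ is then $J$ (up to isomorphism), which is a core. Logical equivalence follows because, for a schema mapping specified by ($L$-)\,s-t tgds, the solutions of a source instance $I$ are exactly the target instances admitting a homomorphism from the canonical universal solution of $I$ — the antecedents are evaluated on $I$ alone, so the canonical universal solution is itself a solution, and it is universal — and the canonical universal solution of $I$ w.r.t.\ $\cM'$, being (up to isomorphism) $J$, is homomorphically equivalent to the canonical universal solution of $I$ w.r.t.\ $\cM$; hence $\cM$ and $\cM'$ admit the same solutions for every source instance.

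To prove the claim I would match the two instances f-block by f-block. On the one hand, each f-block of the canonical universal solution of $I$ w.r.t.\ $\cM'$ is (a fresh copy of) $t(\vec{a},\vec{N})$ for some $t\in\textsc{Types}_\cM$ and some $\vec{a}$ with $I\models precon_t(\vec{a})\wedge sidecon_t(\vec{a})$ — applications of the new tgds use pairwise disjoint fresh nulls, and by condition~(c) of Definition~\ref{def:plausible} each application contributes a single connected component — and by Proposition~\ref{prop:precon} the truth of $precon_t(\vec{a})$ implies that $t$ is realized in $J$ at $\vec{a}$, so this f-block is isomorphic to an f-block of $J$. On the other hand, by Proposition~\ref{prop:plausible} every f-block of $J$ has a type $t\in\textsc{Types}_\cM$ and is realized at some $\vec{a}$; by safety of $sidecon_t$ (Proposition~\ref{prop:rigidity}) it is realized at some $\vec{a}'$ with $sidecon_t(\vec{a}')$, and then $precon_t(\vec{a}')$ holds by Proposition~\ref{prop:precon}, so the $t$-tgd fires at $\vec{a}'$ and produces an isomorphic copy of it. What is then left is to show that each f-block of $J$ is produced \emph{exactly once}, so that the canonical universal solution of $I$ w.r.t.\ $\cM'$ is not merely homomorphically equivalent to $J$ but actually isomorphic to it (and therefore itself a core). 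For this I would use rigidity of $sidecon_t$ to rule out two firings of the same $t$-tgd yielding copies of a single f-block, and the subtracted ``$\neg\exists\vec{x}'(\cdots)$'' conjuncts in the definition of $precon_t$ to rule out firings of different tgds yielding copies of a single f-block; once uniqueness is established, the per-block isomorphisms (each the identity on constants, a bijection on nulls) glue into a global isomorphism because f-blocks of either instance share no null values.

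I expect the ``produced exactly once'' step to be the main obstacle. It is where the preconditions, the side conditions and the definition of $\textsc{Types}_\cM$ have to be shown to interlock precisely, and it requires a careful analysis of the several ways in which one f-block of a core universal solution can simultaneously be a realization of different, mutually incomparable f-block types, and of the same type at different tuples of constants — the non-rigid case, already handled by the side conditions, being only the simplest manifestation. The two consequences of the claim, laconicity and logical equivalence, then follow exactly as indicated above.
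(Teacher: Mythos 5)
Your construction and argument are essentially the paper's own proof: the same $\cM'$ built from $precon_t\wedge sidecon_t \to \exists\vec{y}.\bigwedge t$, reduced to showing that the canonical universal solution w.r.t.\ $\cM'$ coincides with the core universal solution w.r.t.\ $\cM$, and established by the same three facts (each block of the new canonical solution is a copy of a core block via Proposition~\ref{prop:precon}; each core block is produced via Propositions~\ref{prop:plausible}, \ref{prop:precon} and safety; no block is produced twice via rigidity). Your ``produced exactly once'' step is exactly the paper's third fact, which the paper likewise dispatches briefly (attributing the cross-type case to $\textsc{Types}_\cM$ containing no two types that are renamings of each other, where you invoke the negative strict-embedding conjuncts of $precon_t$), so your proposal is at the same level of rigor and follows the same route.
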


\begin{proof}
  We define $\cM'$ to consist of the following \FOO s-t tgds. For each
   $t(\vec{x};\vec{y})\in\textsc{Types}_\cM$, we take the \FOO s-t tgd
  $$\forall\vec{x}(precon_{t}(\vec{x}) ~\land~ sidecon_{t}(\vec{x})
  ~\to~ \exists\vec{y}.\bigwedge t(\vec{x};\vec{y}))$$
  In order to show that $\cM'$ is laconic and logically equivalent to
  $\cM$ (on structures where $<$ denotes a linear order), it is enough
  to show that, for every source instance $I$, the canonical universal
  solution $J$ of $I$ with respect to $\cM'$ is a core
  universal solution for $I$ with respect to $\cM$. 
  This follows from the following three facts:

\begin{enumerate}
\item Every f-block of $J$ is a copy of an f-block of the core
  universal solution of $I$. This follows from Proposition~\ref{prop:precon}.
\item Every f-block of the core universal solution of $I$ is a copy of an f-block 
  of $J$. This follows from Proposition~\ref{prop:plausible} and
  Proposition~\ref{prop:precon}, together with the safety part of 
   Proposition~\ref{prop:rigidity}. 
\item No two distinct f-blocks of $J$ are copies of each other.
  This follows from the rigidity part of 
  Proposition~\ref{prop:rigidity} together with the fact
  that $\textsc{Types}_\cM$ contains no two distinct f-block type 
  that are renamings of each other. \qed
\end{enumerate}
\end{proof}

Incidentally, if the side conditions are left out, then the resulting
schema mapping is still logically equivalent to the original mapping
$\cM$, but it may not be laconic. It will still satisfy a weak form of
laconicity: a variant of the chase defined in \cite{Fagin05:data},
which only fires dependencies whose right hand side is not yet
satisfied, will produce the core universal solution.

\

\section{Target constraints} \label{sec:target-constraints}

In this section we consider schema mappings with target constraints
and we address the question whether our main result can be extended to
this setting. The answer will be negative. However, first we need to
revisit our basic notions, as some subtle issues arise in
the case with target dependencies.

It is clear that we cannot expect to compute core universal solutions
for schema mappings with target dependencies by means of \FOO-term
interpretations. Even for the simple schema mapping defined by the s-t
tgd $Rxy\to R'xy$ and the full target tgd $R'xy\land R'yz\to R'xz$
computing the core universal solution means computing the transitive closure of $R$,
which we know cannot be done in FO logic even on finite
ordered structures. Still, we can define a notion of laconicity for
schema mappings with target dependencies. Let $\cM$ be any schema
mapping specified by a finite set of \FOO s-t tgds $\Sigma_{st}$ and a
finite set of target tgds and target egds $\Sigma_t$, and let $I$ be a source
instance. We define the \emph{canonical universal solution of $I$ with
  respect to $\cM$} as the target instance (if it exists) obtained
by taking the canonical universal solution of $I$ with respect to
$\Sigma_{st}$ and chasing it with the target dependencies $\Sigma_t$.
We assume a standard chase but will not make any assumptions on the 
chase order.
Laconicity is now defined as before: a schema mapping is laconic if
for each source instance, the canonical universal solution coincides
with the core universal solution.

Recall that, according our main result, we have (i) every schema
mapping $\cM$ specified by $\FOO$ s-t tgds is logically equivalent to
a laconic schema mapping $\cM'$ specified by $\FOO$ s-t tgds. In
particular, this implies that, (ii) for each source instance $I$, the
core universal solution for $I$ with respect to $\cM$ is the canonical
universal solution for $I$ with respect to $\cM'$. For the implication
from (i) to (ii) the requirement of logical equivalence turns out to
be stronger than needed: it is enough that $\cM$ and $\cM'$ are
\emph{CQ-equivalent}, i.e., have the same core universal solution
(possibly undefined) for each source instance \cite{Fagin08:towards}.
While CQ-equivalence and logical equivalence coincide for schema
mappings specified by \FOO s-t tgds (as follows from the closure under
target homomorphisms), the first is strictly weaker than
the second in the case with target dependencies \cite{Fagin08:towards}.

\begin{theorem}\label{thm:impossibility-tgd}
  There is a schema mapping $\cM$ specified by finitely many LAV s-t
  tgds and full target tgds, for which there is no CQ-equivalent laconic
  schema mapping $\cM'$ specified of \FOO tgds, target tgds and target
  egds.
\end{theorem}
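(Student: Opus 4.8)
The plan is to pin down a single schema mapping $\cM$ of the allowed form---LAV s-t tgds and full target tgds---whose core universal solutions implicitly perform an identification of nulls that cannot be reproduced by chasing with \FOO s-t tgds followed by target tgds and egds, and then to derive a contradiction from CQ-equivalence together with laconicity. (CQ-equivalence is what is needed here because, as noted just above the theorem, a laconic CQ-equivalent $\cM'$ is exactly one whose canonical universal solution equals the core universal solution of $\cM$ on every source instance.)

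Concretely, one such $\cM$ has source schema $\{U,E\}$ ($U$ unary, $E$ binary), target schema $\{E',R\}$, and consists of the LAV s-t tgds $Exy\to E'xy$ and $Ux\to\exists y.Rxy$ together with the full target tgd $E'xy\land Ryz\to Rxz$. Writing $z\leadsto a$ for ``$z$ reaches $a$ along $E$-edges'' (reflexively), the canonical universal solution of a source instance $I$ has $E'=E$ and, for each $a$ with $Ua$, a null $N_a$ for which $RzN_a$ holds exactly when $z\leadsto a$. Hence $N_a$ maps homomorphically onto $N_b$ (fixing all constants) iff $\{z\mid z\leadsto a\}\subseteq\{z\mid z\leadsto b\}$, which holds iff $a\leadsto b$; so the f-blocks of the core universal solution are in bijection with the $\leadsto$-maximal marked vertices (one per maximal strongly connected component of $E$ meeting $U$), and the f-block of such an $N_a$ has size $|\{z\mid z\leadsto a\}|$, in particular unbounded over the class of source instances.

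Now suppose $\cM'$, specified by \FOO s-t tgds together with target tgds and egds, is laconic and CQ-equivalent to $\cM$. Then for every $I$ the canonical universal solution of $I$ with respect to $\cM'$---obtained by chasing $I$ with the s-t tgds and then with the target dependencies---must be isomorphic to the core universal solution just described. Three elementary facts about the $\cM'$-chase are then in tension with this. First, the s-t step creates nulls according to \FOO preconditions over the source, and reachability is not \FOO-definable, so the s-t step alone cannot select the $\leadsto$-maximal marked vertices. Second, chasing with target tgds only adds facts, so it can neither delete a null created in the s-t step nor any fact around it. Third, chasing with target egds can identify two nulls only when they jointly satisfy a fixed conjunctive pattern in the current target instance, whereas the identification carried out by the core of $\cM$ requires testing the inclusion $\{z\mid z\leadsto a\}\subseteq\{z\mid z\leadsto b\}$, a condition whose natural formulation needs a universal quantifier over a transitive-closure relation. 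Combining these on a suitable family $I_n$ of source instances (for instance long directed paths, or paths together with their reversals, with $U$ chosen so that the number of $\leadsto$-maximal marked vertices depends on $n$ in a way no \FOO-query over the source can capture), $\cM'$ is forced either to leave two nulls apart that the core identifies---contradicting laconicity, since the core is the unique smallest universal solution---or to produce too few nulls on some instance---contradicting CQ-equivalence.

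The step that needs real care, and that I expect to be the main obstacle, is ruling out a $\cM'$ that uses target tgds \emph{with existential quantifiers} to fabricate and route fresh nulls so as to imitate the core's collapsing without ever invoking an egd; such tgds can in principle build unbounded target structure. To handle this I would isolate a locality/boundedness invariant of $\cM'$'s canonical universal solution---each of its f-blocks is determined by a bounded-radius neighbourhood of the source (for the s-t nulls) together with a bounded-length target derivation (for the target-tgd nulls)---and then show, by an Ehrenfeucht--Fra\"iss\'e or pumping argument on the $I_n$, that the f-blocks forced by this invariant cannot match those of the core of $\cM$, whose number and whose internal sizes both depend on the global reachability structure of $E$. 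The remaining verifications---that $\cM$ has the stated core universal solutions, and that facts one through three hold---are routine.
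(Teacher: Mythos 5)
Your choice of $\cM$ (with $Exy\to E'xy$, $Ux\to\exists y.Rxy$ and the full tgd $E'xy\land Ryz\to Rxz$) is reasonable and your computation of its core universal solutions is correct, but the impossibility argument is not actually carried out, and the gap you yourself flag is precisely the heart of the theorem. Your ``three elementary facts'' do not combine into a proof: the chase of the hypothetical $\cM'$ is an iterated fixpoint, so the non-\FOO-definability of reachability does not constrain it (indeed your own $\cM$ computes reachability with a single full target tgd), and fact (3) only says that a \emph{single} egd application is triggered by a CQ pattern -- it does not rule out that a sequence of tgd and egd steps, interleaved, performs exactly the merging of $N_a$ into $N_b$ that the core requires. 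Worse, the repair you propose -- a locality/boundedness invariant asserting that every f-block of $\cM'$'s canonical universal solution is determined by a bounded-radius source neighbourhood and a bounded-length target derivation -- is false for chases with full target tgds and egds: transitive-closure-style tgds produce facts whose derivations have length growing with the instance, and egds can merge nulls created from arbitrarily distant parts of the source, so the invariant you would need to ``isolate'' does not hold. Since for your $\cM$ the number of core nulls and the sizes of their blocks depend on the global reachability structure, any proof must control the \emph{entire} target chase of $\cM'$, and nothing in the proposal does that.

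The paper sidesteps exactly this difficulty by a different mechanism, which your plan has no analogue of. Its $\cM$ is engineered so that the non-FO property (cyclicity of $R$) manifests only in a constant-size portion of the core: whether the $Q_3$-null is a fresh null or coincides with the $Q_1$- and $Q_2$-nulls. The proof then (a) takes a cyclic $I_1$ and an acyclic $I_2$ agreeing on all \FOO sentences up to the quantifier rank of $\Sigma_{st}'$, so that the s-t phase of $\cM'$ creates the same three ``separate'' nulls on both; and (b) -- this is the key step -- argues by monotonicity that if some egd of $\Sigma_t'$ later identifies the spurious $Q_3$-null over $I_1$, then, since egd premises are CQs preserved under homomorphisms and cannot mention the order, the same egd fires inside the core universal solution $J_1'$ itself, contradicting that $J_1'$ is a fixpoint of the target chase. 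In other words, the paper never tries to bound what the target chase can compute (which is hopeless); it transfers one offending chase step into the core via a homomorphism. To salvage your version you would need such a transfer argument for your instances $I_n$, and with unboundedly many nulls and unbounded block sizes that is substantially harder than with the paper's three-null gadget; as written, the proposal leaves the decisive step unproved.
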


\begin{proof} (sketch)
  Let $\cM$ be the schema mapping specified by the LAV s-t tgds

\begin{itemize}
\item
  $Rx_1x_2 \to R'x_1x_2$
\item
  $P_i x \to \exists y . Q_i y$ \qquad for $i\in\{1,2,3\}$. 
\end{itemize}
  and the full target tgds
\begin{itemize}
\item
  $R'xy \land R'yz \to R'xz$
\item
  $R'xx \land P_1 y \to P_3 y$
\item
  $R'xx \land P_2 y \to P_3 y$
\end{itemize}
For source instances $I$ in which the relations $R,P_1,P_2,P_3$ are
non-empty, the core universal solution $J$ will have the following
shape: $J(R')$ is the transitive closure of $I(R)$, and $J(Q_1),
J(Q_2), J(Q_3)$ are non-empty. Moreover, if $I(R)$ contains a cycle,
then $J(Q_1)=\{N_1\}$, $,J(Q_2)=\{N_2\}$ and $J(Q_3)=\{N_1,N_2\}$ for
distinct null values $N_1, N_2$, while if $I(R)$ is acyclic, $J(Q_1)$,
$J(Q_2)$ and $J(Q_3)$ are disjoint singleton sets of nulls.

Suppose for the sake of contradiction that there is a CQ-equivalent
laconic schema mapping $\cM'$ specified by a finite set of \FOO s-t
tgds $\Sigma_{st}$ and a finite set of target tgds and egds $\Sigma_t$.
In particular, for each source instance $I$, the canonical universal solution of $I$ with respect to $\cM'$ is the core universal solution of $I$ with respect to $\cM$.
 Let $n$ be the maximum
quantifier rank of the formulas in $\Sigma_{st}$. 

\begin{trivlist}
\item \textbf{Claim 1:} There is a source instance $I_1$ containing a
  cycle, such that the canonical universal solution $J_1$ of $I_1$ with respect
  to
  $\Sigma_{st}$ contains at least three nulls, one belonging only to
  $Q_1$, one belonging only to $Q_2$, and one belonging only to $Q_3$.

\medskip
\item The proof of Claim 1 is based on the fact that acyclicity is not
  first-order definable on finite ordered structures: take any two sources
  instances $I_1, I_2$ agreeing on all \FOO-sentences of quantifier
  rank $n$ such that $I_1$ contains a cycle and $I_2$ does not. We may
  even assume that $P_1, P_2, P_3$ are non-empty in both instances.

  Let $J_1$ and $J_2$ be the canonical universal solutions of $I_1$
  and $I_2$  with respect to $\Sigma_{st}$. 
  Then $J_2$ must contain at least three
  nulls, one belonging only to $Q_1$, one belonging only to $Q_2$ and
  one belonging only to $Q_3$. To see this, note that, first of all,
  $J_2$
  must be a homomorphic pre-image of the core universal solution of
  $I_2$ with respect to $\cM$. Secondly,
  if one of the relations $Q_i$ a non-empty in $J_2$, then the crucial
  information that $I_2(P_i)$ is non-empty is lost, in the sense that $J_2$
  would be a homomorphic pre-image of the source instance that is 
  like $I_2$ except that the relation $P_i$ is empty, which impies that,
  the result of chasing $J_2$ with $\Sigma_t$ must be homomorphically
  contained in the core universal solution of this modified source instance
  with respect to $\cM$, which is different from the core universal solution
  of $I_2$).

  This shows that $J_2$ must contain at least three
  nulls, one belonging only to $Q_1$, one belonging only to $Q_2$ and
  one belonging only to $Q_3$. 
  Each of these nulls must have been created by
  the application of a dependency from $\Sigma_{st}$. Since $I_1$ and
  $I_2$ agree on all \FOO-sentences of quantifier rank $n$, the
  left-hand-side of this dependency is also satisfied in $I_1$, and
  hence the same null is also created in the canonical universal
  solution of $I_1$.

\medskip
\item \textbf{Claim 2:} Let $J'_2$ be result of chasing $J_2$ with
  $\Sigma_t$ (assuming it exists). Then $J'_2$ cannot be the core
  universal solution of $I_1$ with respect to $\cM$. 

\medskip
\item The proof of Claim 2 is based on a monotonicity argument. More
  precisely, we use the fact that the left-hand-side of each target
  depedency is a conjunctive query, and hence is preserved under
  homomorphisms.  Let us assume for the sake of contradiction that
  $J'_1$ is the core universal solution of $I_1$ with respect to
  $\cM$, which contains exactly two null values, one in $Q_1\cap Q_3$
  and one in $Q_2\cap Q_3$. Let $N_1, N_2, N_3$ be null values
  belonging only to $J_1(P_1)$, only to $J_1(Q_2)$ and only to
  $J_1(Q_3)$, respectively.  It is easy to see that, during the chase
  with $\Sigma_t$, $N_3$ must have been identified with $N_1$ or $N_2$
  by means of a target egd $\phi$. A monotonicity argument shows that the same target egd $\phi$
  can be used to identify the two null values in the core universal
  solution $J'_1$ (note that the target dependencies cannot refer to
  the linear order on the constants). This contradicts the fact that
  $J'_1$ is the end-result of the chase with $\Sigma_t$. \qed
\end{trivlist}
\end{proof}

We expect that similar arguments can be used to find a schema mapping
$\cM$ specified by a finite set of LAV s-t tgds and target egds, such that
there is no CQ-equivalent laconic schema mapping specified by a finite
set of \FOO s-t tgds, target tgds and target egds.

\bibliographystyle{plain}
\bibliography{refs}

\appendix

\end{document}